\DeclareSymbolFont{symbols2}{LS1}{stixfrak} {m} {n}
\DeclareMathSymbol{\operp}{\mathbin}{symbols2}{"A8}
\newtheorem{theorem}{Theorem}
\newtheorem{proposition}[theorem]{Proposition}
\newtheorem{lemma}[theorem]{Lemma}
\newtheorem{corollary}[theorem]{Corollary}
\theoremstyle{definition}
\newtheorem{remark}[theorem]{Remark}
\newtheorem{example}[theorem]{Example}
\numberwithin{theorem}{section}
\title{Algebraic Geometry for Spin-Adapted Coupled Cluster Theory}
\author{Fabian M.\ Faulstich and Svala Sverrisdóttir}
\date{\today}
\newcommand{\PP}{\mathbb{P}}
\newcommand{\CC}{\mathbb{C}}
\newcommand{\ZZ}{\mathbb{Z}}
\begin{document}

\maketitle

\begin{abstract}
We develop and numerically analyze an algebraic--geometric framework for spin-adapted coupled-cluster (CC) theory. Since the electronic Hamiltonian is ${\rm SU}(2)$--invariant, physically relevant quantum states lie in the spin singlet sector. We give an explicit description of the ${\rm SU}(2)$--invariant (spin singlet) many-body space by identifying it with an Artinian commutative ring, called the excitation ring, whose dimension is governed by a Narayana number. We define spin--adapted truncation varieties via embeddings of graded subspaces of this ring, and we identify the CCS truncation variety with the Veronese square of the Grassmannian. Compared to the spin-generalized formulation, this approach yields a substantial reduction in dimension and degree, with direct computational consequences. In particular, the CC degree of the truncation variety -- governing the number of homotopy paths required to compute all CC solutions -- is reduced by orders of magnitude. We present scaling studies demonstrating asymptotic improvements and we exploit this reduction to compute the full solution landscape of spin-adapted CC equations for water and lithium hydride.
\end{abstract}

\section{Introduction}

Interacting many-electron problems pose some of the greatest computational challenges in science, with essential applications across chemistry, materials science, and condensed-matter physics~\cite{wang2024ab,aizawa2022delayed,kirkpatrick2021pushing,cui2022systematic,chang2024downfolding,herzog2024coupled,hagai2024extended,jing2023precise,shi2025accurate}. Scalable accurate solutions will enable predictive simulations of chemical reactivity and kinetics, as well as reliable access to ground- and excited-state properties of quantum systems. Coupled cluster (CC) theory is widely regarded as one of the most accurate general-purpose methods in electronic structure theory~\cite{bartlett2007coupled,crawford2007introduction,kummel1991origins,vcivzek1991origins,bartlett2005theory,paldus2005beginnings,arponen1991independent,bishop1991overview}. It is routinely used as a benchmark in method development~\cite{grafova2010comparative,takatani2010basis,jurevcka2006benchmark,rezac2011s66,rezac2012benchmark} and as a target accuracy level for data-driven and machine-learning approaches~\cite{ramakrishnan2015big,smith2019approaching,smith2020ani,ruth2022machine,daru2022coupled}. 

\medskip
The governing equations, called the {\it CC equations}, correspond to a system of polynomial equations~\cite{vcivzek1966correlation,purvis1982full,noga1987full}. Under certain assumptions, isolated roots of this system, called the {\it CC solutions}, can recover the exact eigenstates of the underlying electronic Hamiltonian within the chosen model space. Understanding the structure of this solution set, e.g., its size, multiplicities, and dependence on physical parameters, is therefore of computational as well as physical interest. Yet, because of the high dimensionality and strong nonlinearity of the CC equations, many fundamental features of their solution landscapes remain undiscovered.

\medskip
In this work, we extend recent algebraic investigations establishing a foundational understanding of the coupled-cluster solution set~\cite{faulstich2024coupled, faulstich2024algebraic, sverrisdottir2024exploring}. In
particular, we here study \emph{spin adaptation} and its effect on the structure and complexity of the CC equations~\cite{paldus1977correlation}. Spin adaptation is
central in quantum many-body applications: the conservation of total spin implies that the physically relevant states lie in prescribed
spin sectors, e.g., singlets for closed-shell molecules~\cite{pauncz2012spin}. Restricting the parameterization to spin-adapted excitations both enforces physical consistency and substantially simplifies the associated polynomial system, leading to significant computational speedups confirmed by our numerical results.

\medskip
We develop an algebraic-geometric formulation of spin-adapted coupled cluster theory and use it to reduce the complexity of {\it fully} solving the CC equations in the physically relevant spin singlet sector. In Section \ref{sec:repsFD} we study the
${\rm SU}(2)$--invariant space of the of $d$--electron space $\mathcal{H}_d \cong \wedge^d \mathcal{H}$. We derive a closed formula for the dimensions of the total-spin sectors, including the ${\rm SU}(2)$--invariant subspace. We identify $\mathcal{H}_d$ with a particle--number conserving subalgebra of the Fermi--Dirac algebra. There, spin adaptation becomes the restriction to ${\rm SU}(2)$--invariant cluster amplitudes. We prove that the space of invariant amplitudes is an Artinian commutative ring, called the excitation ring, generated by the \emph{excitation operators}. In Section \ref{sec:cctheory} we introduce the \emph{spin singlet truncation varieties}. They are the Zariski closures of the projective images of graded subspaces of the excitation ring under the exponential map. We show that the singlet CCS truncation variety, consisting of ${\rm SU}(2)$--invariant Slater determinants, is the Veronese square of the Grassmannian. The truncation varieties provide a geometric framework for formulating the spin-adapted CC equations.
We offer an upper bound of the CC degree (general solution count for the CC equations) in terms of invariants of the truncation varieties; and we find the explicit CC degree for the $2$nd Veronese variety. 

\medskip
In Section \ref{sec:numsim} we illustrate the computational improvements of these structural results. The CC degree determines the number of continuation paths in homotopy-based
solvers, therefore restricting to ${\rm SU}(2)$--invariants collapses the generic root count while targeting {\it spin-pure solutions}. We compare the CC degrees of the spin-adapted CC equations with those of the spin-generalized formulation, revealing dramatic reductions by orders of magnitude. We compute the full solution spectrum of the singlet CCD equations for lithium hydrate (LiH) and water (H$_2$O) in a minimal-basis.
Numerically, the speedup is already substantial for small systems, enabling systematic \emph{algebro-computational} investigations of coupled-cluster solution landscapes that were computationally out of reach in the spin-generalized formulations.

\subsection{Previous Works and Perspective}

The first systematic study of the root structure of coupled cluster (CC) equations dates back to 1978, when \v{Z}ivkovi\'c and Monkhorst analyzed singularities and multiple solutions in single-reference CC theory~\cite{vzivkovic1978analytic}. In the early 1990s, Paldus and coauthors developed further mathematical and numerical analyses of the solution manifolds of single-reference and state-universal multireference CC equations, including their singularities and analytic properties~\cite{paldus1993application,piecuch1990coupled}. Homotopy methods were revived in the CC context by Kowalski and Jankowski, who solved a CCD system in~\cite{kowalski1998towards}; Kowalski and Piecuch subsequently extended these ideas to CCSD, CCSDT, and CCSDTQ for a four-electron minimal-basis model~\cite{piecuch2000search}. These works led to the $\beta$-nested equations and the \emph{Fundamental Theorem of the $\beta$-NE Formalism}~\cite{kowalski2000complete2}, which explains how solution branches connect across truncation levels (e.g., CCSD $\to$ CCSDT $\to$ CCSDTQ). The resulting \emph{Kowalski--Piecuch homotopy} has recently been analyzed in depth using topological degree theory~\cite{csirik2023disc,csirik2023coupled}. Related homotopy-based studies produced complete solution sets of the generalized Bloch equation~\cite{kowalski2000complete} and clarified symmetry-breaking mechanisms within CC theory~\cite{kowalski1998full,jankowski1999physical1,jankowski1999physical2,jankowski1999physical3,jankowski1999physical4}.

\medskip
A more explicitly algebraic and mathematically driven line of work has recently gained momentum. Initiated by some of the present authors, an algebraic perspective on CC theory was introduced in \cite{faulstich2024coupled}, and subsequent in-depth analysis led to the algebraic framework of CC truncation varieties~\cite{faulstich2024algebraic}. This viewpoint has since motivated a sequence of contributions ranging from pure mathematics~\cite{borovik2025coupled,sverrisdottir2025algebraic} to domain applications~\cite{sverrisdottir2024exploring}.
More broadly, these developments reflect a growing interest in a rigorous mathematical understanding of CC theory within the pure and applied mathematics community. In this context, Schneider gave the first local analysis of CC theory in 2009 using functional-analytic techniques~\cite{schneider2009analysis}; this framework was generalized in~\cite{rohwedder2013continuous,rohwedder2013error} and extended to additional CC variants in~\cite{laestadius2018analysis,laestadius2019coupled,faulstich2019analysis}. A complementary and more flexible approach based on topological degree theory was later proposed by Csirik and Laestadius~\cite{csirik2023disc,csirik2023coupled}. The most recent numerical-analysis results for single-reference CC were obtained by Hassan, Maday, and Wang~\cite{hassan2023analysis,hassan2023analysis2}. For a comprehensive review of the recent mathematical advances, we refer the interested reader to~\cite{faulstich2024recent}. Beyond the direct applications to coupled cluster theory, algebro computational approaches find more and more use in electronic structure theory~\cite{gontier2019numerical,cances2024mathematical,pokhilko2025homotopy,harwood2022improving,borovik2025numerical}

\subsection{Notational Preamble}

Throughout this manuscript, $m$ denotes the number of spatial orbitals and $n=2m$ the number of spin orbitals.
We generally consider a closed-shell system with $d=2k$ electrons ($k$ electron pairs).
We use $[m]:=\{1,...,m\}$ and denote $[\![m]\!]:=[m]\times\{\uparrow,\downarrow\}$. 
Operators in the Fermi--Dirac algebra  ${\rm FD}_{2m} \cong{\rm End}(\mathcal F)$ carry hats.

\section{The Electronic Structure Problem}

The {\em electronic Schr\"odinger equation} describing $d$ electrons in the electromagnetic field generated by $d_{\rm nuc}$ fixed nuclei is the eigenvalue problem
\begin{equation}
\label{eq:schroedinger}
\mathscr{H}\,\psi(\mathbf{x}_1,\mathbf{x}_2,\ldots,\mathbf{x}_d)
=
\lambda\,\psi(\mathbf{x}_1,\mathbf{x}_2,\ldots,\mathbf{x}_d),
\end{equation}
where the unknown {\em wave function} $\psi(\mathbf{x}_1,\mathbf{x}_2,\ldots,\mathbf{x}_d)$ fully characterizes the physical system~\cite{schrodinger1926undulatory}.
Here $\mathbf{x}_i=(\mathbf{r}_i,\sigma_i)\in\mathbb{R}^3\times\{\uparrow, \downarrow \}$ denotes the spatial coordinate $\mathbf{r}_i$ and spin coordinate $\sigma_i$ of the $i$th electron. The Pauli principle requires $\psi$ to be skewsymmetric under exchange of any two electron coordinates (vide infra), and we view $\psi$ as a sufficiently smooth element of the skewsymmetric subspace of
$L^2\!\big((\mathbb{R}^3\times\{\uparrow, \downarrow \})^d\big)$. 
In atomic units, the electronic Hamiltonian $\mathscr{H}$ in Eq.~\eqref{eq:schroedinger} is the differential operator
\begin{equation}
\label{eq:ElecSE}
\mathscr{H}
:=
-\frac{1}{2}\sum_{i=1}^d\Delta_{\mathbf{r}_i}
-\sum_{i=1}^d\sum_{j=1}^{d_{\rm nuc}}\frac{Z_j}{\vert \mathbf{r}_i-\mathbf{R}_j\vert}
+\sum_{1\le i<j\le d}\frac{1}{\vert \mathbf{r}_i-\mathbf{r}_j\vert}.
\end{equation}
We denote  the Laplacian in
$\mathbf{r}_i$ by $\Delta_{\mathbf{r}_i}$, and all remaining terms in Eq.~\eqref{eq:ElecSE} act on $\psi$ by multiplication. The nuclei are indexed by $j\in[d_{\rm nuc}]$. The constant $Z_j\in\mathbb{N}$ is the charge (atomic number)
of the $j$th nucleus, and its position $\mathbf{R}_j\in\mathbb{R}^3$ is fixed under the Born--Oppenheimer approximation~\cite{schrodinger1927szur}. Throughout, we consider charge-neutral molecules, i.e.,
$
d=\sum_{j=1}^{d_{\rm nuc}} Z_j.
$
Note that any nucleus--nucleus repulsion term is a constant for fixed $\{\mathbf{R}_j\}$ and may be added to $\mathscr{H}$ without changing the structure of the electronic eigenvalue problem.

\medskip
Electrons possess an intrinsic angular momentum called \emph{spin}~\cite{sakurai2020modern}. In the non-relativistic
Born--Oppenheimer model considered here, spin does not appear explicitly in the differential operator in Eq.~\eqref{eq:ElecSE}. However, it enters the \emph{state space} and the admissibility condition (Pauli principle) in the following way: The physical $d$-electron wave function is a map
\begin{equation}
\label{eq:spinwavefunction}
\psi : (\mathbb{R}^3)^d \times\{\uparrow, \downarrow \}^d \to \mathbb{C}
~;~
(\mathbf r,\boldsymbol\sigma)\mapsto \psi(\mathbf r,\boldsymbol\sigma),
\end{equation}
i.e., it depends on spatial coordinates $\mathbf r=(\mathbf r_1,\ldots,\mathbf r_d)$ and spin labels
$\boldsymbol\sigma=(\sigma_1,\ldots,\sigma_d)$. Pauli's exclusion principle~\cite{pauli1925zusammenhang} states that admissible states are \emph{skewsymmetric under simultaneous exchange} of position and spin, i.e., for every permutation $P$ of the electrons,
\begin{equation}
\label{eq:pauli}
\psi(P\mathbf r, P\boldsymbol\sigma) = \mathrm{sign}(P)\,\psi(\mathbf r,\boldsymbol\sigma).
\end{equation}
In particular, two electrons with equal spin cannot occupy the same position. 

\medskip
To discretize the Hamiltonian in Eq.~\eqref{eq:schroedinger}, we require a finite-dimensional basis. Since the Hamiltonian does not explicitly depend on the electronic spin, we can impose a product basis. To that end, we first introduce the one-electron \emph{spin space}
$
\mathcal H_{\mathrm{spin}} \cong \mathbb{C}^2
$
with orthonormal basis $\{e_\uparrow,e_\downarrow\}$ (spin up/down). Second, we introduce an orthonormal set of spatial basis functions (orbitals)
$\{\varphi_p\}_{p=1}^{m}\subset H^1(\mathbb{R}^3)$.
Each spatial orbital gives rise to two spin-orbitals
\begin{equation}
\phi_{p\uparrow}(\mathbf r,\sigma):=\varphi_p(\mathbf r)\,\delta_{\sigma,\uparrow},
\qquad
\phi_{p\downarrow}(\mathbf r,\sigma):=\varphi_p(\mathbf r)\,\delta_{\sigma,\downarrow}.
\end{equation}
The spin-orbitals span the one-particle space $\mathcal H$, with
$\mathcal H\cong \mathbb C^{m}\otimes\mathbb C^{2}$ and~$\dim(\mathcal H)~=~n~=~2m$. The Galerkin space of skewsymmetric $d$-electron functions is given by $\mathcal H_d \cong \wedge^d\mathcal H$, spanned by the canonical skewsymmetric product basis, i.e., the $d$-electron Slater determinants. 

\subsection{Second Quantization and Fermi Dirac Algebra}
\label{sec:FermiDiracAlgebra}

Discretizing the Hamiltonian in the skewsymmetric $d$-electron space $\wedge^d\mathcal H$ is known as \emph{first quantization}. While conceptually direct, the resulting operators act on a high-dimensional tensor product and quickly become cumbersome to manipulate. An equivalent and more flexible viewpoint is obtained by working in the fermionic Fock space and representing observables as polynomials in creation and annihilation operators, i.e., \emph{second quantization}~\cite{fock1932konfigurationsraum}.

\medskip
Subsequently, we use the explicit basis $e_\downarrow = -e_2, \, e_\uparrow = e_1$ for $\mathcal{H}_{{\rm spin}}$, where $e_1$ and $e_2$ form the standard basis of $\mathbb{C}^2$ and we denote the standard basis vectors of $\mathbb{C}^m$ by $e_p$ with $p\in [m]$. For notational convenience, we define 
$
[\![m]\!] := [m]\times\{\downarrow,\uparrow\}
$, with $n = 2m$ elements. Hence, a basis of $\mathcal{H}$ is given by $e_{p\alpha}:=e_p\otimes e_\alpha$, $(p,\alpha)\in[\![m]\!]$.
The associated fermionic Fock space is the full exterior algebra
\begin{equation}
\mathcal{F}\;\cong\;\bigwedge \mathcal{H}
\;=\;
\bigoplus_{\ell=0}^{n}\bigwedge\nolimits^{\ell}\mathcal{H},
\end{equation}
with canonical exterior basis vectors $e_I:=\wedge_{(p,\alpha)\in I} e_{p\alpha}$ indexed by sets $I\subseteq [\![m]\!]$. For each basis element $e_{p\alpha}\in \mathcal{H}$ we define the creation and annihilation operators
as the following exterior and interior products on $\mathcal{F}$:
\begin{equation}
\label{eq:CreandAni}
\begin{aligned}
\hat a_{p\alpha}^\dagger : \mathcal{F} \to \mathcal{F}~;~
\psi \mapsto e_{p\alpha}\wedge \psi
\quad  {\rm and} \quad 
\hat a_{p\alpha} : \mathcal{F} \to \mathcal{F}~;~
\psi \mapsto e_{p\alpha}\,\lrcorner\,\psi.
\end{aligned}
\end{equation}
Here, $\lrcorner$ is the dual operation to the wedge product $\wedge$, see \cite[Section~3.6]{gallier2020differential} and~\cite{sverrisdottir2025algebraic} for further explanation.  
These operators satisfy the canonical anticommutation relations (CAR), i.e., 
\begin{equation}
\label{eq:fermirels}
\begin{aligned}
[\hat{a}_{p\alpha}^\dagger, \hat{a}_{q\beta}^\dagger]_+ = 0,\quad
[\hat{a}_{p\alpha}, \hat{a}_{q\beta}]_+ =0,\quad
[\hat{a}_{p\alpha}^\dagger, \hat{a}_{q\beta}]_+ = \delta_{p,q}\delta_{\alpha,\beta},
\end{aligned}
\end{equation}
where $[\cdot, \cdot]_+$ denotes the anti-commutator. Note that the CAR implement Pauli's exclusion principle at the operator level, since $(\hat{a}_{p\alpha}^\dagger)^2 = 0$. 

\medskip
The \emph{Fermi--Dirac algebra} ${\rm FD}_{2m}$ is the noncommutative algebra generated by the
$4m$ creation and annihilation operators
subject to the CAR in Eq.~\eqref{eq:fermirels}. Equivalently,
\[
{\rm FD}_{2m}
:=
\mathbb{C}\!\left\langle
\hat{a}_{1\downarrow},\hat{a}_{1\uparrow},\ldots,\hat{a}_{m\downarrow},\hat{a}_{m\uparrow}, \hat{a}_{1\downarrow}^\dagger,\hat{a}_{1\uparrow}^\dagger,\ldots,\hat{a}_{m\downarrow}^\dagger,\hat{a}_{m\uparrow}^\dagger
\right\rangle \Big/ \langle G\rangle,
\]
where $G$ denotes the set of relations from Eq.~\eqref{eq:fermirels}.
Using the creation and annihilation operators, we can discretize the Hamiltonian in Eq.~\eqref{eq:ElecSE} over $\mathcal{F}$~\cite{sakurai2020modern,helgaker2013molecular}, i.e., 
\begin{equation}
\label{eq:Hsecondquant}
\hat H = \sum_{pq,\sigma} h_{pq} \, \hat{a}_{p\sigma}^\dagger \hat{a}_{q\sigma}
+ \tfrac{1}{2} \sum_{pqrs}\sum_{\sigma\tau} v_{pqrs} \,
\hat{a}_{p\sigma}^\dagger \hat{a}_{r\tau}^\dagger \hat{a}_{s\tau} \hat{a}_{q\sigma},
\end{equation}
where $h_{pq}$ and $v_{pqrs}$ are one- and two-electron integrals over spatial orbitals,
\begin{align}
h_{pq} &:= \int \varphi_p^*(\mathbf{r}) \left( -\frac{1}{2}\Delta - \sum_{j} \frac{Z_j}{\Vert \mathbf{r} - \mathbf{R}_j \Vert}\right) \varphi_q(\mathbf{r}) \, d\mathbf{r}, \\
v_{pqrs} &:= \iint \varphi_p^*(\mathbf{r}_1)\varphi_q(\mathbf{r}_1)
\frac{1}{\Vert \mathbf{r}_1-\mathbf{r}_2\Vert }
\varphi_r(\mathbf{r}_2)^*\varphi_s(\mathbf{r}_2) \, d\mathbf{r}_1\, d\mathbf{r}_2.
\end{align}
Since its monomials contain equal numbers of creation and annihilation operators, $\hat H$ conserves particle number and may be restricted to an endomorphism of the $d$th exterior power
\begin{equation}
\mathcal{H}_d \;\cong\; \bigwedge\nolimits^{d}\mathcal{H}\;\subset\;\mathcal{F}.
\end{equation}

\begin{remark}
Wick's theorem states that $G$ forms a Gröbner basis for the Fermi--Dirac algebra, with respect to degree lexicographic order~\cite[Theorem~2.5]{sverrisdottir2025algebraic}. Therefore, the Fermi--Dirac algebra is isomorphic to a $2^{4m}$-dimensional vector space generated by the \textit{standard monomials}, also known as the normal ordered strings of ladder operators~\cite{dirac1927quantum,JordanWigner1928}. Every element in FD$_{2m}$ can be written as a linear combination of them, called the \textit{standard representation}.
\end{remark}

\medskip
Elements of the Fermi--Dirac algebra are endomorphisms of $\mathcal{F}$ and, by dimension counting, we see that, as a vector space, the Fermi--Dirac algebra is isomorphic to $\operatorname{End}(\mathcal{F})$. 
Moreover, the Fermi--Dirac algebra emits a natural $\ZZ^{2}$--grading, by setting $\deg(a_{p\alpha}^\dagger) = e_1$ and $\deg(a_{p\alpha}) = e_2$. This bidegree coincides with the \emph{annihilation and creation levels} introduced in \cite[Section~3]{sverrisdottir2025algebraic}.
Subsequently, we assume particle--number conservation and restrict attention to the subalgebra generated by elements of bidegree $(\ell,\ell)$ for $1 \le \ell \le m$, which we refer to as having \textit{excitation level} $\ell$.
As a vector space, this subalgebra is isomorphic to
\begin{equation}
\operatorname{End}(\mathcal H_0)\oplus \operatorname{End}(\mathcal H_1)\oplus \cdots \oplus \operatorname{End}(\mathcal H_n),
\qquad \mathcal H_\ell:=\wedge^\ell \mathcal H.
\end{equation}

\section{Representations of the Fermi--Dirac Algebra}\label{sec:repsFD}

The electronic spin can be viewed as a two-dimensional internal degree of freedom of each electron~\cite{pauncz2012spin,sakurai2020modern}. In the non-relativistic Born--Oppenheimer setting (i.e., in the absence of spin--orbit and magnetic-field terms), the electronic-structure Hamiltonian $\hat H$ in Eq.~\eqref{eq:Hsecondquant} is $\mathrm{SU}(2)$-invariant (i.e., spin-rotationally invariant) and hence commutes with the total-spin operators $\hat S^2$ and $\hat S_z$ (vide infra), so that the Hilbert space decomposes into spin sectors (e.g., singlets, triplets, etc.)~\cite{pauncz2012spin}. For spin-$\tfrac12$ the relevant symmetry group is not $\mathrm{SO}(3)$ itself but its double cover $\mathrm{SU}(2)$, whose defining representation acts on the spin space $\mathcal H_{\mathrm{spin}}\cong\mathbb C^2$~\cite{pauncz2012spin,sakurai2020modern,takhtadzhian2008quantum}. To exploit this symmetry in many-electron computations, we study the induced $\mathrm{SU}(2)$-action on the skewsymmetric $d$-electron space and, equivalently, on the operator algebra in second quantization. This provides a precise algebraic notion of spin invariance and allows us to isolate, in particular, the singlet sector used in spin-adapted formulations~\cite{paldus1972correlation}. The argument uses two standard facts. First, differentiation induces a one-to-one correspondence between finite-dimensional
representations of $\mathrm{SU}(2)$ and finite-dimensional representations of $\mathfrak{su}(2)$. Indeed, because  $\mathrm{SU}(2)$ is
simply connected, every finite-dimensional $\mathfrak{su}(2)$--representation integrates to $\mathrm{SU}(2)$
(cf.~\cite[Exercise~8.10]{fulton2013representation}). Second, complex representations of $\mathfrak{su}(2)$ are in
one-to-one correspondence with representations of $\mathfrak{sl}_2(\mathbb C)$ via complexification, since
$\mathfrak{su}(2)\otimes_{\mathbb R}\mathbb C \cong \mathfrak{sl}_2(\mathbb C)$.

\subsection{Spin of a Single Electron}
Recall that the Lie algebra $\mathfrak{su}(2)$ consists of traceless skew-Hermitian $2\times 2$ matrices,
\begin{equation}
\mathfrak{su}(2)=\{A\in\mathbb{C}^{2\times 2}:\ A^\ast=-A,\ {\rm tr}(A)=0\}.
\end{equation}
The three $2 \times 2$ matrices $( \tfrac{i}{2}\sigma_x, \tfrac{i}{2}\sigma_y, \tfrac{i}{2}\sigma_z)$, where $\sigma_x,\sigma_y,\sigma_z$ are the Pauli matrices~\cite{pauli1927quantenmechanik}, form a basis for $\mathfrak{su}(2)$. Similarly, it is common to work with the associated Hermitian operators, i.e., $S_x=\tfrac12\sigma_x$, $S_y=\tfrac12\sigma_y$, and $S_z=\tfrac12\sigma_z$, noting that $-iS_x,-iS_y,-iS_z\in\mathfrak{su}(2)$ and that spin rotations are given by $\exp(-i\,\theta_x S_x-i\,\theta_y S_y-i\,\theta_z S_z)\in{\rm SU(2)}$. For our purposes, it is convenient to work with the associated spin ladder operators, i.e., 
\begin{equation}
S_+ 
= S_x + i S_y
= \frac{1}{2}(\sigma_x + i\sigma_y),\qquad
S_- 
= S_x - i S_y
= \frac{1}{2}(\sigma_x - i\sigma_y),\qquad
S_z = \frac{1}{2}\sigma_z,
\end{equation}
where $S_z$ is diagonal in the $\{\uparrow,\downarrow\}$ basis and hence directly encodes the spin projection, while $S_+$ and $S_-$ act as raising and lowering operators, i.e., flipping ``$\downarrow~\mapsto~\uparrow$'' and ``$\uparrow~\mapsto~\downarrow$'', respectively. Note that $S_\pm$ are not elements of $\mathfrak{su}(2)$ but lie in its complexification $\mathfrak{sl}_2(\mathbb C)$. Moreover, a central operator for the discussion of spin symmetry is the {\it total--spin operator} 
\begin{equation}
\label{eq:Casimir}
S^2
:= S_x^2 + S_y^2 + S_z^2
= S_z^2+\tfrac12\big(S_+S_-+S_-S_+\big).
\end{equation}

\begin{remark}\label{re:irrreps}
    The finite-dimensional irreducible representations of ${\rm SU}(2)$ are parametrized by non-negative half-integers: for each $j \in \tfrac{1}{2}\mathbb{Z}_{\ge 0}$ there is an irreducible representation $V_j \cong \operatorname{Sym}^{2j}(\mathbb{C}^2)$, see \cite[Section~11.1]{fulton2013representation}.
    The space $\operatorname{Sym}^{2j}(\mathbb{C}^2)$ can be realized as the space of homogeneous polynomials in two variables $x,y$ of degree $2j$. It has the monomial basis $\{x^{2j-i} y^i \mid 0 \le i \le 2j\}$, and hence $\dim V_j = 2j + 1$. 
    The ${\rm SU}(2)$--action on $V_j$ is the induced action on homogeneous polynomials coming from left multiplication of the variable vector.
\end{remark}

The irreducible representation of ${\rm SU}(2)$ on $V_{\frac{1}{2}} \cong \mathbb{C}^2$ is called the spin-$\tfrac{1}{2}$ representation. Differentiation yields the corresponding representation of the Lie algebra $\mathfrak{su}(2)$, commonly referred to as the fundamental representation. We define an $\mathfrak{su}(2)$--action on the spin space  $\mathcal{H}_{\rm spin} \cong \mathbb{C}^2 \cong V_{\frac{1}{2}}$ by describing the action of the generators on the basis vectors, i.e.\

\begin{equation}
\label{eq:su2action2by2}
    \begin{aligned}
    S_{+}(e_{\downarrow}) &= -e_{\uparrow}& \quad
    S_{-}(e_{\downarrow}) &= 0& \quad 
    S_{z}(e_{\downarrow}) &= -\frac{1}{2}e_{\downarrow}& \\  
    S_{+}(e_{\uparrow}) &= 0& \quad 
    S_{-}(e_{\uparrow}) &= -e_{\downarrow}& \quad 
    S_{z}(e_{\uparrow}) &= \frac{1}{2}e_{\uparrow}.&
\end{aligned}
\end{equation}
See equations (3.10)--(3.12) in \cite[Section~3.2]{takhtadzhian2008quantum}. 
In the explicit basis $e_{\downarrow} = -e_2$,  $e_{\uparrow} = e_1$ this action coincides with left multiplication: for $g \in \mathfrak{su}(2)$ it acts on $\mathcal{H}_{\rm spin}$ by $v \mapsto gv$.

\medskip
We can extend this action to an $\mathfrak{su}(2)$--action on the one-particle space $\mathcal H$ by acting only on the
spin factor, i.e., $g(e_{p\alpha}) = e_p \otimes g e_\alpha$ for $g \in \mathfrak{su}(2)$.
Equivalently, in the basis $\{e_{p\alpha}\}_{(p,\alpha)\in[\![m]\!]}$ the action is represented by
\begin{equation}
g_m := I_m \otimes g \in \mathbb C^{(2m)\times(2m)} \cong \mathbb C^{n\times n}.
\end{equation}
Note that this representation is reducible.

\subsection{Spin of $d$-Electron States and Spin-Singlet $d$-Electron States}
We further induce the $\mathfrak{su}(2)$--action onto the space $\mathcal{H}_d$ of $d$-particle states, canonically via
\begin{equation}
\label{eq:wedgeaction}
g(e_I) 
:=
\sum_{r = 1}^d e_{i_1\alpha_1} \wedge \cdots \wedge g(e_{i_r \alpha_r}) \wedge \cdots \wedge e_{i_d\alpha_d}.
\end{equation}
In the exterior basis $\{e_I:|I|=d\}$ of $\mathcal H_d$, the action is represented by a $\binom{n}{d}\times\binom{n}{d}$ matrix acting by left multiplication on the coordinate vectors:
\begin{equation}
\label{eq:wedgeaction_insertion}
g^{(d)}
:=
\sum_{k=1}^d
\underbrace{I_{\mathcal{H}}\wedge ... \wedge I_{\mathcal{H}}}_{k-1~{\rm times}}
\wedge
g_m
\wedge 
\underbrace{I_{\mathcal H}\wedge ... \wedge  I_{\mathcal{H}}}_{d-k~{\rm times}}
\;\in\;{\rm End}(\mathcal H_d).
\end{equation}

The space of {\it total-spin singlet states} (or simply {\it spin singlets}) $\mathcal H_d^{\rm SU(2)}$ is defined by the quantum states that are invariant under the natural SU(2)-action on the spin factor, or equivalently, the quantum states that are mapped to zero by the derived $\mathfrak{su}(2)$--action (cf.~\eqref{eq:wedgeaction}).  We can describe this ${\rm SU}(2)$-invariant subspace explicitly by

\begin{equation}
\label{eq:SpinSinglets}
\mathcal H_d^{\rm SU(2)}
:=\{\psi\in\mathcal H_d~:~ S_z^{(d)}\psi=S_+^{(d)}\psi=S_-^{(d)}\psi=0\}.
\end{equation}
Alternatively, we define the space of spin singlets as the eigenspace of $S^2$ with eigenvalue $0$:
\begin{equation}
\label{eq:SpinSinglets2}
\mathcal H_d^{\rm SU(2)}
=\{\psi\in\mathcal H_d~:~ \left(S^{(d)}\right)^2 \psi=0\}.
\end{equation}
We see that by definition of $\left(S^{(d)}\right)^2$, Eq.~\eqref{eq:SpinSinglets} implies Eq.~\eqref{eq:SpinSinglets2}. For the inverse, recall that $S_x^{(d)}$, $S_y^{(d)}$, and $S_z^{(d)}$ are self-adjoint and therefore 
\begin{equation}
\begin{aligned}
\langle \psi ,\left(S^{(d)}\right)^2 \psi \rangle
&=
\langle \psi , \left(S_x^{(d)}\right)^2 \psi \rangle
+ \langle \psi , \left(S_y^{(d)}\right)^2 \psi \rangle
+ \langle \psi , \left(S_z^{(d)}\right)^2 \psi \rangle\\
&= \Vert S_x^{(d)} \psi \Vert^2 +  \Vert S_y^{(d)} \psi \Vert^2 + \Vert S_z^{(d)} \psi\Vert^2 \geq 0.
\end{aligned}
\end{equation}
Hence, $\psi \in {\rm ker}(\left(S^{(d)}\right)^2)$ implies $S_x^{(d)}\psi=S_y^{(d)}\psi=S_z^{(d)}\psi=0$ and consequently $S_\pm^{(d)}\psi=0$. More generally, the total-spin operator $\left(S^{(d)}\right)^2$ is the Casimir operator of the induced $\mathfrak{su}(2)$--action~\cite{GreinerMueller1994QMSymmetries}. Consequently, it acts on $\mathcal{H}_d$ by multiplication of a scalar. In particular, we define the spin-$s$ sector as the subspace of $\mathcal{H}_d$ where
\begin{equation}
\left(S^{(d)}\right)^2 \psi = s(s+1)\,\psi.
\end{equation}
For $s=0, \tfrac12, 1, \tfrac32,\dots$, these spaces are referred to as the spin singlet, doublet, triplet, quartet, and higher spin sectors, respectively.
For example, the spin triplet sector is the eigenspace of $\left(S^{(d)}\right)^2$ corresponding to eigenvalue $2$. This yields a canonical decomposition of $\mathcal H_d$ into irreducible spin sectors.
The electronic Hamiltonian in Eq.~\eqref{eq:Hsecondquant} is ${\rm SU}(2)$--invariant and hence it commutes with $\left(S^{(d)}\right)^2$, and therefore preserves the decomposition. Computations are thus typically carried out within a fixed sector -- in this paper, the spin singlet sector.

\medskip
From a representation-theoretic viewpoint, the singlet sector corresponds to the trivial representation $V_0$ of ${\rm SU}(2)$, which is $1$-dimensional and consists of ${\rm SU}(2)$--invariant vectors (equivalently, vectors annihilated by the derived $\mathfrak{su}(2)$--action). The space $\mathcal H_d$ is completely reducible as an ${\rm SU}(2)$--module. Hence it decomposes into irreducibles with multiplicities,
\begin{equation}
\mathcal{H}_d \cong \bigoplus_{j\in \frac12\mathbb Z_{\ge0}}  V_j^{\oplus c_j},
\end{equation}
where $V_j$ denotes the $j$th irreducible ${\rm SU}(2)$--representation, defined in Remark \ref{re:irrreps}. In this decomposition, the spin singlet sector is exactly the $j=0$ isotypic component,
\begin{equation}
\mathcal{H}_d^{{\rm SU}(2)} \cong  V_0^{c_0} \cong \mathbb{C}^{c_0}.
\end{equation}
Equivalently, the spin-$s$ sectors are isomorphic to the $s$ isotypic comonents $V_s^{\oplus c_s}$~\cite{GreinerMueller1994QMSymmetries}.

\begin{remark}
\label{rmk:SpinRestricted}
In quantum chemistry, ``spin-restricted'' refers to a constraint at the one-particle (orbital) level, e.g.,~in spin-restricted Hartree-Fock theory, the $\alpha$ and $\beta$ electrons share the same spatial orbitals. In particular, spin-restricted is primarily a statement about the reference and orbital parametrization, not about the symmetry content of a correlated ansatz. By contrast, \emph{spin-adapted} (i.e., SU$(2)$-adapted) refers to the \emph{many-electron} level: the wavefunction (or equivalently the operator ansatz acting on a reference) is constructed to transform in a fixed irreducible SU$(2)$--representation. Thus, ``restricted'' and ``adapted'' address different levels of theory (i.e., orbitals vs.~many-body symmetry) and are logically independent. A restricted reference does not by itself enforce SU$(2)$ adaptation of a correlation ansatz, whereas SU$(2)$ adaptation explicitly restricts the state to a chosen spin sector.
\end{remark}

\begin{remark}\label{re:kerSz}
We note that $\mathcal H_d^{\rm SU(2)}\neq\{0\}$ only when $d$ is even, e.g., $d=2k$. An exterior basis vector  $e_{I\uparrow}\wedge e_{J\downarrow}$ of $\mathcal{H}_d$ (with $|I|+|J|=d$) is an eigenvector of $S_{z}^{(d)}$. Here, each
spin-up (``\,$\uparrow$\,'') contributes $+\tfrac12$ to the eigenvalue and each spin-down (``\,$\downarrow$\,'') contributes $-\tfrac12$. Explicitly,
\begin{equation}
S_z^{(d)}\bigl(e_{I\uparrow}\wedge e_{J\downarrow}\bigr)
=
\frac12\bigl(|I|-|J|\bigr)\,e_{I\uparrow}\wedge e_{J\downarrow}.
\end{equation}
If $\psi\in \mathcal H_d^{\rm SU(2)}$, then in particular $S_z^{(d)}\psi=0$, so $\psi$ can have support only on basis vectors with $|I|=|J|$ enforcing $d=|I|+|J|=2|J|$. Therefore $\ker(S_z^{(d)})=\{0\}$ for odd $d$, and, consequently, $\mathcal H_d^{\rm SU(2)}\subseteq \ker(S_z^{(d)})$ implies $\mathcal H_d^{\rm SU(2)}=\{0\}$ whenever $d$ is odd.
\end{remark}

\begin{remark}[Proper embeddings of $\mathcal{H}_d^{{\rm SU}(2)}$]
\label{re:isspin}
Note that the condition $S_z\psi=0$ alone selects the sector $\ker(S_z)\subset\mathcal H_{d}$, isomorphic to
$\wedge^k\mathbb{C}^m\otimes \wedge^k\mathbb{C}^m$ via $e_{I\uparrow}\wedge e_{J\downarrow}\mapsto e_I\otimes e_J$.
Since $\ker(S_z)$ generally contains contributions from higher-spin irreducibles, we obtain a proper embedding 
\begin{equation}
\mathcal H_{d}^{\rm SU(2)}\subsetneq \ker(S_z)\cong \wedge^k\mathbb{C}^m\otimes \wedge^k\mathbb{C}^m.
\end{equation}
In particular, the conditions $S_{\pm}\psi = 0$ enforce the spin--up and spin--down orbitals to be isomorphic. 
Therefore, we instead consider the symmetric power $\operatorname{Sym}_2(\wedge^k\mathbb{C}^m)$, which has dimension
$\binom{\binom{m}{k}+1}{2}$. This space is isomorphic to the subspace of $\mathcal H_{d}$ generated by
\begin{equation}
e_{J\uparrow} \wedge e_{J\downarrow}, \qquad
e_{J\uparrow} \wedge e_{K\downarrow} + (-1)^{|J \backslash K|}e_{K\uparrow} \wedge e_{J\downarrow},
\end{equation}
where $J<K\in \binom{[n]}{k}$ are distinct subsets of size $k$. This symmetry, however, still does not enforce
SU(2)-invariance: the above generators need not vanish under the action of $S_\pm$.
For example, when $k=2$, consider the vector
\begin{equation}
\psi =
e_{1\uparrow}\wedge e_{2\uparrow}\wedge e_{3\downarrow}\wedge e_{4\downarrow}
+
e_{1\downarrow}\wedge e_{2\downarrow}\wedge e_{3\uparrow}\wedge e_{4\uparrow}.
\end{equation}
Here, $\psi \in \operatorname{Sym}_2(\wedge^k\mathbb{C}^m)$, but $S_+\psi \neq 0$. Hence $\mathcal H_{d}^{\rm SU(2)}$ is also a proper subspace of $\operatorname{Sym}_2(\wedge^k\mathbb{C}^m)$ and determining the invariant space itself requires more
refined tools. Nonetheless, the proper embedding of the invariant space $\mathcal{H}_d^{{\rm SU}(2)}$ into $\operatorname{Sym}_2(\wedge^k\mathbb{C}^m)$ will be useful in Section \ref{sec:cctheory}, as the larger space exhibits a simpler structure than the invariant space itself. 
\end{remark}

\medskip

We analyze the irreducible decomposition of the ${\rm SU}(2)$--representation $\mathcal{H}_d$ using characters. Recall that the \emph{character} of a finite-dimensional complex representation $V$ of ${\rm SU}(2)$ is the class function
\begin{equation}
    \chi_V:{\rm SU}(2)\to\mathbb{C}~;~ g\mapsto \operatorname{Tr}\!\left(g\big|_{V}\right).
\end{equation}
Here, $g|_V$ denotes the map $V \to V$ defined by a $2 \times 2$ matrix $g \in {\rm SU}(2)$, and $\operatorname{Tr}\!\left(g\big|_{V}\right)$ is the trace of the corresponding matrix of size $\dim(V)$.
The character $\chi_V$ is constant on conjugacy classes~\cite[Chapter~10]{michalek2021invitation}. Since every $U\in{\rm SU}(2)$ is unitary (hence normal), it is conjugate in ${\rm SU}(2)$ to an element of the maximal torus
\begin{equation}
\mathbb{T} := \left\{{\rm diag}(z,z^{-1})~:~|z|=1\right\}\cong S^1.
\end{equation}
Ergo, characters of ${\rm SU}(2)$ are determined by their restriction to $\mathbb{T}$. It is therefore convenient to write
$
\chi_V(z)\;:=\;\chi_V({\rm diag}(z,z^{-1})),
$
so that $\chi_V$ is a Laurent polynomial with finite support.
Moreover, since $t \in \mathbb{T}$ is conjugate with $t^{-1}$, the Weyl symmetry $\chi_V(z)=\chi_V(z^{-1})$ holds.

The irreducible representations of ${\rm SU}(2)$ are indexed by $j \in \tfrac{1}{2}\mathbb{Z}_{\ge 0}$ and may be realized as
$
V_j \cong {\rm Sym}^{2j}(\CC^2).
$
The action of $t={\rm diag}(z,z^{-1})\in \mathbb{T}$ on ${\rm Sym}^{2j}(\CC^2)$ is diagonal in the standard monomial basis, with eigenvalues $z^{2j},z^{2j-2},\ldots,z^{-2j}$. Hence, the irreducible character is
\begin{equation}
\label{eq:irrcharssu2}
\chi_j(z)
={\rm Tr}\left(t\big|_{V_j}\right)
=\sum_{i=0}^{2j} z^{2j-2i}
= z^{2j}+z^{2j-2}+\cdots+z^{-2j}.
\end{equation}
In particular, $\chi_{1/2}(z)=z+z^{-1}$. These characters are linearly independent and, by \cite[Proposition~2.1]{fulton2013representation}, an irreducible decomposition $V\cong \bigoplus_j V_j^{\oplus c_j}$ implies the character identity
\begin{equation}
\chi_V(z)=\sum_j c_j\,\chi_j(z).
\end{equation}
Thus the multiplicities $c_j$ (and in particular $c_0=\dim ( \mathcal H_d^{{\rm SU}(2)}$)) can be recovered by decomposing $\chi_V$ in the basis $\{\chi_j\}_j$ of irreducible characters.

\begin{lemma}
\label{thm:decompHd}
The character of the representation $\mathcal{H}_d$ of ${\rm SU}(2)$ is the Laurent polynomial
\begin{equation}\label{eq:charHd}
    \chi_{\mathcal{H}_d}(z)
    = \sum_{i = 0}^d \binom{m}{i}\binom{m}{d - i} z^{\,d-2i}.
\end{equation}
If $d = 2k$ then $\mathcal{H}_d$ decomposes into a sum of integer irreducible representations $V_j$
where $0 \le j \le k$, with multiplicity
\begin{equation}\label{eq:multofvj}
    c_{j} = \binom{m}{k - j}\binom{m}{k + j} - \binom{m}{k - j - 1}\binom{m}{k + j + 1}.
\end{equation}
If $d = 2k + 1$ then $\mathcal{H}_d$ decomposes into a sum of half-integer irreducible representations
$V_{j + \frac{1}{2}}$ where $0 \le j \le k$ with multiplicity 
\begin{equation}\label{eq:repmultodd}
c_{j + \frac{1}{2}}
=
\binom{m}{k-j}\binom{m}{k+j+1}
-
\binom{m}{k-j-1}\binom{m}{k+j+2}.
\end{equation}
\end{lemma}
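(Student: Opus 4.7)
The plan is to compute the character $\chi_{\mathcal{H}_d}(z)$ directly from the exterior basis and then extract the multiplicities $c_j$ by matching coefficients against the irreducible characters $\chi_j$ from Eq.~\eqref{eq:irrcharssu2}, exploiting the telescoping structure of those characters.

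First, I would use the fact (established in Remark~\ref{re:kerSz}) that any exterior basis vector $e_{I\uparrow}\wedge e_{J\downarrow}$ with $|I|+|J|=d$ is an eigenvector of $S_z^{(d)}$ with eigenvalue $\tfrac12(|I|-|J|)$. Since every element of ${\rm SU}(2)$ is conjugate to some $t={\rm diag}(z,z^{-1})\in\mathbb T$, and since $t$ acts on the spin space via $e_\uparrow\mapsto z e_\uparrow$, $e_\downarrow\mapsto z^{-1}e_\downarrow$, the induced action on $\mathcal H_d$ (cf.~\eqref{eq:wedgeaction_insertion}) is diagonal in the exterior basis with eigenvalue $z^{|I|-|J|}$. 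Collecting basis vectors of fixed $|J|=i$ (so $|I|=d-i$) yields exactly $\binom{m}{d-i}\binom{m}{i}$ contributions of weight $z^{d-2i}$, producing the claimed Laurent polynomial in Eq.~\eqref{eq:charHd}. The Weyl symmetry $\chi_{\mathcal{H}_d}(z)=\chi_{\mathcal{H}_d}(z^{-1})$ then follows from the symmetry $i\leftrightarrow d-i$.

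Next, by complete reducibility of finite-dimensional ${\rm SU}(2)$--representations and the character identity mentioned before the lemma, one has $\chi_{\mathcal{H}_d}=\sum_j c_j\chi_j$ with $c_j\in\mathbb Z_{\ge0}$. The parity of the exponents appearing in Eq.~\eqref{eq:charHd} is the parity of $d$, so only integer $j$ contribute when $d=2k$ and only half-integer $j$ when $d=2k+1$; this already restricts the range. For $d=2k$, I would compare the coefficients of $z^{2j}$ on both sides for $0\le j\le k$. Since $\chi_l(z)$ contains $z^{2j}$ with coefficient $1$ precisely when $l\ge j$, the comparison gives
\begin{equation}
\sum_{l=j}^{k} c_l \;=\; \binom{m}{k-j}\binom{m}{k+j}.
\end{equation}
Subtracting the identity for $j+1$ from the identity for $j$ yields the formula in Eq.~\eqref{eq:multofvj} by a one-line telescoping argument.

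The odd case is essentially identical: for $d=2k+1$, the exponents of $\chi_{\mathcal{H}_d}$ are odd, and $\chi_{l+\tfrac12}(z)$ contains $z^{2j+1}$ with coefficient $1$ exactly when $l\ge j$. Comparing coefficients of $z^{2j+1}$ gives
\begin{equation}
\sum_{l=j}^{k} c_{l+\tfrac12} \;=\; \binom{m}{k-j}\binom{m}{k+j+1},
\end{equation}
and telescoping produces Eq.~\eqref{eq:repmultodd}. There is no substantial obstacle in this argument: the only things to verify carefully are the bookkeeping of exponents (ensuring the substitution $d-2i=2j$ or $d-2i=2j+1$ yields the correct binomials) and the non-negativity/termination of the telescope at $j=k$, which holds because $\binom{m}{k-j-1}\binom{m}{k+j+1}$ vanishes once $k+j+1>m$ or $k-j-1<0$, consistent with the range $0\le j\le k$.
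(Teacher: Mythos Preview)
Your proposal is correct and follows essentially the same approach as the paper: both compute the character by reading off the torus weights on the exterior basis (the paper phrases this as principal minors of the diagonal matrix $g_n(z)$, you phrase it via the eigenvalues $z^{|I|-|J|}$, which is the same computation), and both extract the multiplicities by using that the coefficient of $z^{2j}$ in $\chi_{\mathcal{H}_d}$ equals $\sum_{l\ge j} c_l$, then taking successive differences. The paper presents the latter step as a recursive ``triangular elimination'' starting from the highest weight, while you present it as a direct coefficient comparison followed by telescoping; these are the same argument.
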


\begin{proof}
Note that on the one-particle space $\mathcal H\cong \CC^m\otimes\CC^2$, the element ${\rm diag}(z,z^{-1})\in \mathbb{T}$ acts as the
$n\times n$ diagonal matrix
\begin{equation}
g_n(z)={\rm diag}(z,z^{-1},z,z^{-1},\dots,z,z^{-1}).
\end{equation}
Hence, the character of $\mathcal H_d=\wedge^d\mathcal H$ is the trace of the $d$th exterior power of $g_n(z)$.
Its diagonal entries are the principal $d\times d$ minors of $g_n(z)$, each obtained by selecting $d$ diagonal
entries of $g_n(z)$. Choosing $i$ entries equal to $z^{-1}$ and $d-i$ entries equal to $z$ yields the monomial
$z^{d-i}(z^{-1})^{i}=z^{d-2i}$, and the number of such choices is $\binom{m}{i}\binom{m}{d-i}$, proving Eq.~\eqref{eq:charHd}.

\medskip
We now decompose $\chi_{\mathcal H_d}$ into the irreducible characters in Eq.~\eqref{eq:irrcharssu2}.
We first assume ${d=2k}$, and without loss of generality, we may assume $2k \le m$.  We determine the multiplicities recursively by matching the highest power of $z$.
Write
\begin{equation}
\chi_{\mathcal H_d}(z)=\sum_{j\in\mathbb Z} a_j\,z^{2j}, \qquad a_j=\binom{m}{k-j}\binom{m}{k+j},
\qquad a_{-j}=a_j.
\end{equation}
with the convention $\binom{m}{r}=0$ for $r\notin\{0,\dots,m\}$. In particular, $a_j=0$ for $j>k$, so the
highest power occurring in $\chi_{\mathcal H_d}(z)$ is $z^{2k}$.
Its coefficient is $a_{k}$, so we set $c_{k}:=a_{k}$ and consider
\begin{equation}
\chi^{(1)}(z):=\chi_{\mathcal H_d}(z)-c_{k}\chi_{k}(z).
\end{equation}
Since $\chi_{\mathcal H_d}$ is invariant under $z\mapsto z^{-1}$ and has nonnegative integer coefficients, the same holds
for $\chi^{(1)}$. Moreover, the terms $z^{2k}$ and $z^{-2k}$ are removed, and the remaining highest term is $z^{2(k-1)}$ with coefficient $c_{k - 1}$.
Continuing downward, we find that the coefficient of $z^{2j}$ in $\chi_{\mathcal H_d}(z)$ is $\sum_{\ell\ge j} c_\ell$,
so the multiplicities satisfy
\begin{equation}
c_j=a_j-a_{j+1}
=
\binom{m}{k-j}\binom{m}{k+j}-\binom{m}{k-j-1}\binom{m}{k+j+1}.
\end{equation}
Now, let $d=2k+1$. The character formula \eqref{eq:charHd} becomes
\begin{equation}
\chi_{\mathcal H_{2k+1}}(z)=\sum_{j \in \ZZ}\binom{m}{k - j }\binom{m}{k+1 + j}\,z^{\,2j + 1},
\end{equation}
hence $\chi_{\mathcal H_{2k+1}}(z)$ is a Laurent polynomial with only odd exponents. Therefore, in the decomposition of
$\mathcal H_{2k+1}$ into irreducibles, only half-integer representations can occur.
We determine the coefficients $c_{j + \frac12}$ by the same triangular elimination as in the even case, matching the highest powers of $z$ successively. This gives the formula in Eq. (\ref{eq:repmultodd}). 
\end{proof}

\begin{theorem}
\label{th:DimHSU2}
For $d=2k$ electrons in $n = 2m$ spin orbitals, the dimension of the spin singlet sector $\mathcal{H}_d^{{\rm SU}(2)} \subsetneq \mathcal{H}_d$ is given by
\begin{equation}
\dim\bigl(\mathcal{H}_{d}^{{\rm SU}(2)}\bigr)
= N(m + 1, k + 1),
\end{equation}
where $N(m+1,k+1)$ is a Narayana number (see \texttt{A001263}).
\end{theorem}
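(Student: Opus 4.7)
The plan is to reduce Theorem~\ref{th:DimHSU2} to a combinatorial identity between the multiplicity formula in Lemma~\ref{thm:decompHd} and a Narayana number. By the representation-theoretic discussion preceding the theorem statement, the spin singlet sector is the $V_0$--isotypic component of $\mathcal{H}_d$, and since $\dim V_0 = 1$ we have $\dim(\mathcal{H}_d^{{\rm SU}(2)}) = c_0$, the multiplicity of the trivial representation. So the whole theorem reduces to matching Lemma~\ref{thm:decompHd}'s expression for $c_0$ in the even case $d = 2k$ with $N(m+1,k+1)$.

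First I would invoke Lemma~\ref{thm:decompHd} with $j = 0$ and $d = 2k$, which yields
\[
c_0 \;=\; \binom{m}{k}^{\!2} \;-\; \binom{m}{k-1}\binom{m}{k+1}.
\]
Second, I would verify the identity
\[
\binom{m}{k}^{\!2} - \binom{m}{k-1}\binom{m}{k+1}
\;=\;
\frac{1}{m+1}\binom{m+1}{k+1}\binom{m+1}{k}
\;=\; N(m+1,k+1)
\]
by direct manipulation. Factoring $\binom{m}{k}^{\!2}$ out of the left-hand side and writing the second product as $\binom{m}{k}^{\!2}\cdot\tfrac{k(m-k)}{(k+1)(m-k+1)}$, the key elementary identity is $(k+1)(m-k+1) - k(m-k) = m+1$, which collapses the bracket to $\tfrac{m+1}{(k+1)(m-k+1)}$. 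Then the relations $\binom{m+1}{k+1} = \tfrac{m+1}{k+1}\binom{m}{k}$ and $\binom{m+1}{k} = \tfrac{m+1}{m-k+1}\binom{m}{k}$ rearrange the resulting expression into the claimed Narayana form $\tfrac{1}{m+1}\binom{m+1}{k+1}\binom{m+1}{k}$.

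The calculation is essentially routine and there is no conceptual obstacle, since Lemma~\ref{thm:decompHd} already carries all of the representation-theoretic content. The only point requiring mild care is the boundary cases $k = 0$ (vacuum) and $k = m$ (completely filled spin orbitals), in which one of the binomials in the difference is zero by the convention $\binom{m}{r} = 0$ for $r\notin\{0,\dots,m\}$; in both cases both sides of the identity reduce to $1$, consistent with $N(m+1,1) = N(m+1,m+1) = 1$, and can be verified by direct inspection.
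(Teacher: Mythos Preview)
Your proposal is correct and follows essentially the same route as the paper: the paper's proof is the one-line observation that $\dim(\mathcal{H}_d^{{\rm SU}(2)}) = c_0 = \binom{m}{k}^2 - \binom{m}{k-1}\binom{m}{k+1} = \tfrac{1}{m+1}\binom{m+1}{k+1}\binom{m+1}{k} = N(m+1,k+1)$, deferring the binomial identity to a silent verification, while you have supplied the elementary algebra filling that step in. There is no meaningful difference in approach.
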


\begin{proof}
This is a direct consequence of  Lemma~\ref{thm:decompHd} since
\begin{small}
    $$
\dim\bigl(\mathcal{H}_{d}^{{\rm SU}(2)}\bigr) = c_0 = \binom{m}{k}^2 - \binom{m}{k - 1}\binom{m}{k + 1} = \frac{1}{m + 1}\binom{m + 1}{k + 1}\binom{m + 1}{k}
= N(m + 1, k + 1).
$$
\end{small}
\end{proof}

\begin{remark}
The Narayana numbers form a triangle of positive integers for $0\le k\le m$ and refine the Catalan numbers via
\begin{equation}
\label{eq:catsum}
\sum_{r = 0}^m N(m + 1,r + 1) = C_{m + 1}.
\end{equation}
The Catalan number $C_{m+1}$ counts Dyck paths from $(0,0)$ to $(2m+2,0)$, while the Narayana number $N(m+1,k+1)$ counts Dyck paths with exactly $k+1$ peaks (equivalently, $k$ pits).
\end{remark}

\begin{example}[$2k=d=4;\,2m = n = 8$]
We consider $4$ electrons in $4$ spatial orbitals. Then $\mathcal{H}_4\cong \wedge^4\mathbb{C}^8$ has
dimension $\dim (\mathcal{H}_4) = \binom{8}{4}=70$ and Theorem~\ref{th:DimHSU2} yields 
\begin{equation}
\dim\bigl(\mathcal{H}_4^{{\rm SU}(2)}\bigr)=N(5,3)=20.
\end{equation}
The invariant space
$\mathcal{H}_4^{{\rm SU}(2)}$ is the intersection of the kernels of the three $70 \times 70$ matrices $S_z^{(d)},S_+^{(d)},S_-^{(d)}$ and it is generated by the following $20$ vectors:
\begin{align*}
 &e_{1\uparrow,\,1\downarrow,\,2\uparrow,\,2\downarrow}, \,\, e_{1\uparrow,\,1\downarrow,\,2\downarrow,\,3\uparrow}-e_{1\uparrow,\,1\downarrow,\,2\uparrow,\,3\downarrow}, \,\, e_{1\downarrow,\,2\uparrow,\,2\downarrow,\,3\uparrow}-e_{1\uparrow,\,2\uparrow,\,2\downarrow,\,3\downarrow}, \,\,
e_{1\uparrow,\,1\downarrow,\,3\uparrow,\,3\downarrow},\\
 &e_{1\downarrow,\,2\uparrow,\,3\uparrow,\,3\downarrow}-e_{1\uparrow,\,2\downarrow,\,3\uparrow,\,3\downarrow}, \,\, e_{2\uparrow,\,2\downarrow,\,3\uparrow,\,3\downarrow}, \,\, e_{1\uparrow,\,1\downarrow,\,2\downarrow,\,4\uparrow}
-e_{1\uparrow,\,1\downarrow,\,2\uparrow,\,4\downarrow}, \,\, e_{1\downarrow,\,2\uparrow,\,2\downarrow,\,4\uparrow}-e_{1\uparrow,\,2\uparrow,\,2\downarrow,\,4\downarrow},\\
 &e_{1\uparrow,\,1\downarrow,\,3\downarrow,\,4\uparrow}-e_{1\uparrow,\,1\downarrow,\,3\uparrow,\,4\downarrow}, \,\, e_{1\downarrow,\,2\uparrow,\,3\downarrow,\,4\uparrow}-e_{1\uparrow,\,2\downarrow,\,3\downarrow,\,4\uparrow}-e_{1\downarrow,\,2\uparrow,\,3\uparrow,\,4\downarrow}+e_{1\uparrow,\,2\downarrow,\,3\uparrow,\,4\downarrow},\\
 &e_{2\uparrow,\,2\downarrow,\,3\downarrow,\,4\uparrow} - e_{2\uparrow,\,2\downarrow,\,3\uparrow,\,4\downarrow}, \,\, e_{1\downarrow,\,2\downarrow,\,3\uparrow,\,4\uparrow}-e_{1\uparrow,\,2\downarrow,\,3\downarrow,\,4\uparrow}-e_{1\downarrow,\,2\uparrow,\,3\uparrow,\,4\downarrow}+e_{1\uparrow,\,2\uparrow,\,3\downarrow,\,4\downarrow},\\
 &e_{1\downarrow,\,3\uparrow,\,3\downarrow,\,4\uparrow}-e_{1\uparrow,\,3\uparrow,\,3\downarrow,\,4\downarrow}, \,\,
e_{2\downarrow,\,3\uparrow,\,3\downarrow,\,4\uparrow} - e_{2\uparrow,\,3\uparrow,\,3\downarrow,\,4\downarrow},
\,\, e_{1\uparrow,\,1\downarrow,\,4\uparrow,\,4\downarrow}, \,\,
e_{1\downarrow,\,2\uparrow,\,4\uparrow,\,4\downarrow} - e_{1\uparrow,\,2\downarrow,\,4\uparrow,\,4\downarrow}, \\
 &e_{2\uparrow,\,2\downarrow,\,4\uparrow,\,4\downarrow}, \,\, e_{1\downarrow,\,3\uparrow,\,4\uparrow,\,4\downarrow}-e_{1\uparrow,\,3\downarrow,\,4\uparrow,\,4\downarrow}, \,\,
e_{2\downarrow,\,3\uparrow,\,4\uparrow,\,4\downarrow}-e_{2\uparrow,\,3\downarrow,\,4\uparrow,\,4\downarrow}, \,\, e_{3\uparrow,\,3\downarrow,\,4\uparrow,\,4\downarrow}.
\end{align*}
\end{example}

\subsection{Spin of Fock States and Spin-Singlet Fock States}

We now pass from the $d$-electron sector to the full fermionic Fock space i.e., the direct sum of all particle-number sectors (see Section~\ref{sec:FermiDiracAlgebra}). The induced $\mathfrak{su}(2)$--action preserves each particle sector $\mathcal H_k=\wedge^k\mathcal H$ and, in the canonical exterior basis, the action of element $g \in \mathfrak{su}(2)$ is represented by the block diagonal matrix
\begin{equation}\label{eq:actmatF}
    \hat{g}= g^{(0)} \oplus g^{(1)} \oplus \cdots \oplus g^{(2m)},
\end{equation}
where $g^{(\ell)}$ is defined in Eq.~\eqref{eq:wedgeaction_insertion}.
We can now define the subspace $\mathcal{F}^{{\rm SU}(2)} \subsetneq \mathcal{F}$ of total-spin singlet states as the direct sum of total--spin singlets in all particle sectors, i.e., 
\begin{equation}
\mathcal F=\bigoplus_{\ell=0}^{2m}\mathcal H_\ell,
\quad {\rm and} \quad 
\mathcal F^{\mathrm{SU}(2)}=\bigoplus_{\ell=0}^{2m}\mathcal H_\ell^{\mathrm{SU}(2)}
=\bigoplus_{k=0}^{m}\mathcal H_{2k}^{\mathrm{SU}(2)}.
\end{equation}
Since $\mathcal H_\ell^{{\rm SU}(2)}=\{0\}$ for odd $\ell$, while $\dim(\mathcal H_{2k}^{{\rm SU}(2)})=N(m+1,k+1)$ for $0\le k\le m$, the identity in Eq.~\eqref{eq:catsum} yields
\begin{equation}
\dim\bigl(\mathcal{F}^{{\rm SU}(2)}\bigr)
=
\sum_{k=0}^{m} N(m+1,k+1)
=
C_{m+1}.
\end{equation}
We recall that ${\rm FD}_{2m}\cong {\rm End}(\mathcal F)\cong \mathcal F\otimes \mathcal F^\dagger$, hence, we can write elements FD$_{2m}$ as
\begin{equation}
\hat \Omega = \sum_{I,J \subseteq [\![m]\!]} a_{I,J}\, e_I \otimes e_J^\dagger .
\end{equation}
Therefore, we can canonically induce an $\mathfrak{su}(2)$--action on ${\rm End}(\mathcal F)$, via the action on $\mathcal F$ and its conjugate $\mathcal F^\dagger$. This yields the following action for $\mathfrak{su}(2)$,
\begin{equation}
\label{eq:FDaction}
g(\hat \Omega)
=
\sum_{I,J \subseteq [\![m]\!]} a_{I,J}\, g(e_I)\otimes e_J^\dagger
-
\sum_{I,J \subseteq [\![m]\!]} a_{I,J}\, e_I\otimes g(e_J^\dagger)
=
\hat g \hat \Omega - \hat \Omega \hat g
=
[\hat g,\hat \Omega].
\end{equation}
Here $\hat g$ is the $2^{2m}\times 2^{2m}$ matrix defined in Eq.~\eqref{eq:actmatF}. 
The operator $\hat g$ is an endomorphism of $\mathcal{F}$ and thus it is an element in the Fermi--Dirac algebra.
The Fermi--Dirac representation of the generators $S_{\pm}$ and $S_z$ of $\mathfrak{su}(2)$ is
\begin{equation}
\label{eq:su2FDgens}
    \hat S_{+} = -\sum_{k = 1}^m a_{k\uparrow}^\dag a_{k \downarrow}, \quad
    \hat S_- = -\sum_{k = 1}^m a_{k\downarrow}^\dag a_{k\uparrow}, \quad
    \hat S_z = \frac{1}{2}\sum_{k = 1}^m (a_{k\uparrow}^\dag a_{k\uparrow} -a_{k\downarrow}^\dag a_{k\downarrow}).
\end{equation}
Note that the elements in the Fermi--Dirac algebra, ${\rm FD}_{2m}$, that vanish under the $\mathfrak{su}(2)$--action defined in Eq.~\eqref{eq:FDaction} are precisely those that commute with
$\hat S_\pm$ and $\hat S_z$.

\section{Spin-Adapted Coupled Cluster Theory}\label{sec:cctheory}

The (spin-free) electronic Hamiltonian in Eq.~\eqref{eq:Hsecondquant} is $\operatorname{SU}(2)$-invariant. This follows since it commutes with the generators of the induced $\mathfrak{su}(2)$--action on Fock space. In symbols
\begin{equation}\label{eq:HcommutesSU2}
[\hat H,\hat S_+] = [\hat H,\hat S_-] = [\hat H,\hat S_z] = 0.
\end{equation}
Consequently, $\hat H$ commutes with the Casimir operator
$\hat S^2$, and preserves the decomposition of the Hilbert space into $\operatorname{SU}(2)$-isotypic components. 
The \textit{spin-singlet Schrödinger equation} can therefore be defined as the eigenvalue problem for $\operatorname{SU}(2)$-invariant quantum states,
\begin{equation}\label{eq:schr}
    \hat H\psi = E\psi, \qquad \psi \in \mathcal{H}_d^{\operatorname{SU}(2)}.
\end{equation}

We recall that coupled cluster theory employs the ansatz 
\begin{equation}
\psi = e^{\hat T} e_{[\![k]\!]}
\end{equation}
where {\it the cluster matrix} $\hat T \in {\rm End}(\mathcal{F})$ is the new unknown and $e_{[\![k]\!]}\in\mathcal{F}$ is the reference state~\cite{coester1958bound,vcivzek1966correlation}. 
In particular, we choose the reference state as a closed--shell spin--restricted state  (as in restricted Hartree--Fock theory).

\begin{remark}\label{re:refstate}
    The \textit{closed-shell spin-restricted reference state} 
    $$
    \phi_0 = e_{[\![k]\!]} = e_{1 \downarrow} \wedge e_{1 \uparrow} \wedge \cdots \wedge e_{k \downarrow} \wedge e_{k \uparrow}
    $$ 
    is a singlet state. This can be seen by observing that it is annihilated by the generators $\hat S_{\pm}$ and $\hat S_{z}$. Note that the $\hat S_z$--action vanishes by Remark~\ref{re:kerSz}. Moreover, the raising operator $\hat S_+$ acts by replacing a $\downarrow$-spin electron in a given orbital by an $\uparrow$-spin electron in the same orbital. Since the reference state is a closed shell state, it follows by Pauli's exclusion principle that $\hat S_{+}e_{[\![k]\!]} = 0$. The same argument holds for $\hat S_-$.
\end{remark}

Since the reference state $e_{[\![k]\!]}$ is a singlet, the physically relevant correlated ansätze may be taken entirely within the singlet sector. In other words, we can formulate the coupled cluster equations within the ${\rm SU}(2)$--invariant space of $\mathcal{H}_{d}$.

\subsection{Spin Singlet Truncation Varieties}

Coupled cluster theory is commonly formulated in the molecular orbital (particle--hole) picture~\cite{vcivzek1966correlation}.
Accordingly, we focus on the subalgebra of ${\rm FD}_{2m}$ generated by the occupied annihilators $\hat a_{i\alpha}$, $i\in[k]$,
and the virtual creators $a^\dagger_{b\alpha}$, $b\in[m]\setminus[k]$ (with $\alpha\in\{\uparrow,\downarrow\}$).
These generators mutually anticommute, so this subalgebra is an exterior algebra. In symbols,
\[
\mathbb{C}\langle
\hat a_{1\downarrow}, \hat a_{1\uparrow}, \dots, \hat a_{k\downarrow}, \hat a_{k\uparrow},
\hat a_{k+1,\downarrow}^\dagger, \hat a_{k+1,\uparrow}^\dagger, \dots, \hat a_{m\downarrow}^\dagger, \hat a_{m\uparrow}^\dagger
\rangle/\langle G\rangle
\;\cong\;
\wedge \mathbb{C}^{2m},
\]
where $G$ denotes the set of relations in Eq.~\eqref{eq:fermirels}.
Since the electronic Hamiltonian conserves particle number, we restrict to the particle--number preserving subalgebra $\mathcal{V}$ generated by the excitation
words $\hat a^\dagger_{b\alpha}\hat a_{i\beta}$ of bidegree $(1,1)$, where $i\in[k]$, $b\in[m]\setminus[k]$, and $\alpha,\beta\in\{\uparrow,\downarrow\}$.
Since these generators are words of degree two in FD$_{2m}$, the induced product on $\mathcal V$ is commutative,
and hence $\mathcal V$ is a commutative ring. It is also a graded ring graded by the excitation levels. Hence $\mathcal{V} = \bigoplus_{\ell = 0}^{d} \mathcal{V}^{(\ell)}$, where the $\ell$th grading $\mathcal{V}^{(\ell)}$ is a vector space generated by words with excitation level $\ell$ (degree $\ell$ monomials):
$$
\hat a_{b_1, \alpha_1}^\dag \hat a_{i_1 \beta_1} \cdots \hat a_{b_\ell, \alpha_\ell}^\dag \hat a_{i_\ell, \beta_\ell} = \hat a_{b_\ell, \alpha_\ell}^\dag  \cdots \hat a_{b_1, \alpha_1}^\dag  \hat a_{i_1 \beta_1} \cdots \hat a_{i_\ell \beta_\ell}.
$$
By Pauli's exclusion principle (${a_{p}^\dag}^2 = 0$) it follows that $\mathcal{V}^{(d + 1)} = \{0\}$. That is, the elements of $\mathcal{V}$ are nilpotent of order at most $d$. Elements in $\mathcal{V}$ can thus be written as sums of graded elements: $\hat T =\hat T_0 +  \hat T_1 + \hat T_2 + \cdots + \hat T_{d}$ where
$$
\hat T_\ell = \sum_{\substack{I \subseteq [\![k]\!],B \subseteq[\![m]\!] \backslash [\![k]\!] \\ |I| = |B| = \ell}} t_{I,B} \hat a_{b_\ell, \alpha_\ell}^\dag  \cdots \hat a_{b_1, \alpha_1}^\dag  \hat a_{i_1 \beta_1} \cdots \hat a_{i_\ell \beta_\ell}  = \sum_{\substack{I \subseteq [\![k]\!],B \subseteq[\![m]\!] \backslash [\![k]\!] \\ |I| = |B| = \ell}} t_{I,B}\hat a_B^\dagger \hat a_I \,\,\in \,\, \mathcal{V}^{(\ell)}.
$$
The coordinates $t_{I,B} \in \CC$ of $\mathcal{V}$ are called \textit{cluster amplitudes}.
Note that the indexing sets $[\![m]\!]$ and $[\![k]\!]$ are of size $n = 2m$ and $d = 2k$ respectively and hence the combinatorial identity ${\sum_{\ell = 0}^d \binom{d}{\ell}\binom{n - d}{\ell} = \binom{n}{d}}$ reveals that $\mathcal{V} \cong \CC^{\binom{n}{d}}$.

\begin{proposition}
\label{prop:HdVd_isomorphic}
With the induced $\mathfrak{su}(2)$--actions on $\mathcal H_{2k}\subseteq\mathcal F$ and on
$\mathcal V$ (via $g\cdot \hat T=[\hat g,\hat T]$, cf.~Eq.~\eqref{eq:FDaction}), the map
\begin{equation}
    \Phi: \mathcal{V} \to \mathcal{H}_{2k}, \quad \hat T \mapsto \hat T e_{[\![k]\!]}.
\end{equation}
is an isomorphism of $\mathfrak{su}(2)$--modules.
\end{proposition}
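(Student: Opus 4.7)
The plan is to verify three things: (i) $\Phi$ is linear, (ii) it is a vector-space isomorphism, and (iii) it intertwines the two $\mathfrak{su}(2)$--actions. Linearity is immediate since $\Phi$ is evaluation of an operator at the fixed vector $\phi_0 := e_{[\![k]\!]}$.

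For bijectivity, I would use the monomial basis of $\mathcal{V}$ described just before the proposition. A basis of $\mathcal{V}^{(\ell)}$ is given by the operators $\hat a_B^\dagger \hat a_I$ indexed by pairs $(I,B)$ with $I\subseteq[\![k]\!]$, $B\subseteq [\![m]\!]\setminus[\![k]\!]$, $|I|=|B|=\ell$. Applying such a basis element to the reference and using the CAR (Eq.~\eqref{eq:fermirels}) yields
\begin{equation}
\hat a_B^\dagger \hat a_I\, e_{[\![k]\!]} \;=\; \pm\, e_{([\![k]\!]\setminus I)\cup B},
\end{equation}
which is (up to sign) the Slater determinant in which the occupied labels $I$ of the reference have been replaced by the virtual labels $B$. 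The map $(I,B)\mapsto ([\![k]\!]\setminus I)\cup B$ is a bijection between admissible index pairs and $d$-subsets of $[\![m]\!]$, so $\Phi$ sends the monomial basis of $\mathcal{V}$ bijectively onto the exterior basis of $\mathcal{H}_{2k}$. The dimension count $\sum_{\ell} \binom{d}{\ell}\binom{n-d}{\ell}=\binom{n}{d}$ recorded above is consistent with this, and $\Phi$ is therefore a linear bijection.

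For $\mathfrak{su}(2)$-equivariance, let $g\in\mathfrak{su}(2)$ and $\hat T\in\mathcal{V}$. By the action on $\mathrm{FD}_{2m}$ in Eq.~\eqref{eq:FDaction} and the action on $\mathcal{F}$ in Eq.~\eqref{eq:actmatF},
\begin{equation}
\Phi(g\cdot \hat T)=[\hat g,\hat T]\,e_{[\![k]\!]}=\hat g\,\hat T\, e_{[\![k]\!]}-\hat T\,\hat g\, e_{[\![k]\!]}.
\end{equation}
By Remark~\ref{re:refstate}, the closed-shell reference is a singlet: $\hat S_\pm e_{[\![k]\!]}=\hat S_z e_{[\![k]\!]}=0$, and since these generate the $\mathfrak{su}(2)$-action on $\mathcal{F}$, we have $\hat g\, e_{[\![k]\!]}=0$. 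The second term therefore vanishes and $\Phi(g\cdot\hat T)=\hat g\,\Phi(\hat T)=g\cdot\Phi(\hat T)$, establishing equivariance.

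The only delicate step is the identification of the image of the monomial basis with the exterior basis of $\mathcal{H}_{2k}$: one must keep careful track of signs from anti-commuting annihilators past creators and verify the combinatorial bijection $(I,B)\mapsto ([\![k]\!]\setminus I)\cup B$. Everything else, including the equivariance, is a one-line consequence of the singlet property of the reference. I would close with the remark that the decomposition $\mathcal{V}=\bigoplus_\ell \mathcal{V}^{(\ell)}$ is preserved by $\Phi$ but is not the $\mathfrak{su}(2)$-isotypic decomposition; the latter is obtained by restricting $\Phi$ to each isotypic component and, in particular, identifies the invariants $\mathcal{V}^{\mathrm{SU}(2)}$ with $\mathcal{H}_{2k}^{\mathrm{SU}(2)}$, justifying the ``excitation ring'' terminology used in the sequel.
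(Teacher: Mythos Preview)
Your proposal is correct and follows essentially the same approach as the paper: both establish bijectivity by sending the monomial basis $\hat a_B^\dagger \hat a_I$ of $\mathcal{V}$ to $\pm e_{([\![k]\!]\setminus I)\cup B}$ and check equivariance via the singlet property $\hat g\,e_{[\![k]\!]}=0$ of the closed-shell reference (Remark~\ref{re:refstate}). Your version is slightly more expansive (the explicit dimension count and the closing remark on grading versus isotypic decomposition), but the argument is the same.
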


\begin{proof}
The coordinates of $\Phi(\hat T)$ are of the form $\hat a_B^\dagger \hat a_I\, e_{[\![k]\!]} = (-1)^{\sigma}e_J$, where $J =( [\![k]\!] \backslash I) \cup B$ and the sign $\sigma$ is the sign of the permutation $[\![k]\!]\mapsto (I, J\cap [\![k]\!])$. This proves $\Phi$ is a bijection.
The reference state $e_{[\![k]\!]}$ is a closed-shell spin singlet by Remark \ref{re:refstate}. Therefore, for any $\hat T\in\mathcal V$ and any $g \in \mathfrak{su}(2)$ we get that
\begin{equation}
\Phi(g\cdot \hat T)
= \Phi([\hat g,\hat T])
= [\hat g,\hat T]\,e_{[\![k]\!]}
= \hat g\,\hat T e_{[\![k]\!]} - \hat T\,\hat g\,e_{[\![k]\!]}
= \hat g\,\Phi(\hat T)
= g\cdot \Phi(\hat T).
\end{equation}
Hence $\Phi$ is $\mathfrak{su}(2)$--equivariant, and therefore an isomorphism of $\mathfrak{su}(2)$--modules.
\end{proof}

The isomorphism above also induces a grading on $\mathcal{H}_{2k}$: we say a basis vector $e_{J}$ of $\mathcal{H}_{2k}$ has degree (or excitation level) $|J \backslash [\![k]\!]|$, and we denote the $\ell$th grading of $\mathcal{H}_{2k}$ by $\mathcal{H}_{2k}^{(\ell)}$. 
As a consequence of Proposition \ref{prop:HdVd_isomorphic}, we can study the irreducible decomposition of $\mathcal H_d$ via the irreducible decomposition of $\mathcal V$. In particular:

\begin{corollary}\label{cor:invariants}
    The map $\Phi$ restricts to the invariant spaces of $\mathcal{V}$ and $\mathcal{H}_{2k}$. Therefore
    \[
\mathcal H_{2k}^{\operatorname{SU}(2)} \cong \mathcal V^{\operatorname{SU}(2)}.
\]
\end{corollary}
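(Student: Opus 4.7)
The plan is to invoke the general principle that an equivariant isomorphism of $\mathfrak{su}(2)$-modules restricts to an isomorphism on the corresponding invariant subspaces. Since Proposition \ref{prop:HdVd_isomorphic} already supplies that $\Phi$ is an $\mathfrak{su}(2)$-equivariant linear bijection, the content of the corollary reduces to tracing invariance through $\Phi$ in both directions and appealing to injectivity.

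First I would verify the forward inclusion: if $\hat T\in\mathcal V^{\operatorname{SU}(2)}$, then $g\cdot\hat T=[\hat g,\hat T]=0$ for every $g\in\mathfrak{su}(2)$, so equivariance yields $g\cdot\Phi(\hat T)=\Phi(g\cdot\hat T)=\Phi(0)=0$, giving $\Phi(\hat T)\in\mathcal H_{2k}^{\operatorname{SU}(2)}$. For the reverse inclusion, given $\psi\in\mathcal H_{2k}^{\operatorname{SU}(2)}$, set $\hat T:=\Phi^{-1}(\psi)$; then for every $g\in\mathfrak{su}(2)$, equivariance gives $\Phi(g\cdot\hat T)=g\cdot\Phi(\hat T)=g\cdot\psi=0$, and the injectivity of $\Phi$ forces $g\cdot\hat T=0$, so $\hat T\in\mathcal V^{\operatorname{SU}(2)}$. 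Combined with linearity, this shows that $\Phi$ restricts to a linear bijection $\mathcal V^{\operatorname{SU}(2)}\xrightarrow{\sim}\mathcal H_{2k}^{\operatorname{SU}(2)}$, which is the claim.

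There is no real obstacle here; the only point worth flagging is that the corollary is phrased in terms of $\operatorname{SU}(2)$-fixed vectors (a group condition), whereas the argument uses annihilation by $\mathfrak{su}(2)$ (a Lie-algebra condition). This equivalence is precisely what the excerpt already establishes using that $\operatorname{SU}(2)$ is simply connected, so every finite-dimensional $\mathfrak{su}(2)$-representation integrates uniquely to a representation of $\operatorname{SU}(2)$, and under this correspondence the two notions of invariant vectors coincide. Thus no additional representation-theoretic input is needed, and the corollary follows immediately from Proposition \ref{prop:HdVd_isomorphic}.
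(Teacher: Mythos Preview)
Your argument is correct and is exactly the standard unwinding of ``equivariant isomorphisms restrict to isomorphisms on invariants.'' The paper does not supply a separate proof of this corollary at all; it is stated as an immediate consequence of Proposition~\ref{prop:HdVd_isomorphic}, and your write-up simply makes that immediacy explicit.
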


We want to find a homogeneous basis for $\mathcal V^{\operatorname{SU}(2)}$. This is possible since the generators in Eq.~\eqref{eq:su2FDgens} are homogeneous, and hence the $\mathfrak{su}(2)$--action preserves excitation levels. In particular, each basis element lives in a grading $\mathcal{V}^{(\ell)}$ for some $1 \le \ell \le m$.
We start with the simplest generators of $\mathcal V^{{\rm SU}(2)}$, namely those of excitation level $1$. The {\it excitation operators} are the homogeneous elements of bidegree $(1,1)$ given by:
\[
\hat X_{i,b} = \hat a_{b\uparrow}^\dagger \hat a_{i\uparrow} + \hat a_{b\downarrow}^\dagger \hat a_{i\downarrow},
\qquad 1\le i\le k < b\le n.
\]
They are linearly independent and commute with $\hat S_\pm$ and $\hat S_z$ and are therefore generators of $\mathcal V^{\operatorname{SU}(2)}$.
The \textit{excitation ring} is the graded ring
\[
R=\mathbb{C}[\hat X_{1,k+1}, \hat X_{1,k+2}, \ldots, \hat X_{1,n},\hat X_{2,k+1}, \ldots  \hat X_{k,n}] \subseteq \mathcal{V},
\]
i.e., $R$ is the subring of $\mathcal V$ generated by the $k(m - k)$ excitation operators. Since $R$ is generated by degree one elements in $\mathcal{V}$, it inherits the grading of $\mathcal{V}$ and $R^{(\ell)} \subseteq \mathcal{V}^{(\ell)}$ for all $1 \le \ell \le m$.
Since the variables $\hat X_{i,b}$ commute with $\hat S_\pm$ and $\hat S_z$, every polynomial in $R$ also commutes with them and therefore we obtain a stricter inclusion $R \subseteq \mathcal V^{{\rm SU}(2)}$.

\begin{theorem}\label{thm:exring}
The invariant space $\mathcal V^{\operatorname{SU}(2)}$ is an Artinian ring. The excitation ring $R$ coincides with the invariant ring, that is $R=\mathcal V^{\operatorname{SU}(2)}$.
\end{theorem}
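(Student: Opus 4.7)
The plan is to prove the theorem in three stages. First, I would establish that $\mathcal V^{\operatorname{SU}(2)}$ is a commutative Artinian subring of $\mathcal V$. The key observation is that the $\mathfrak{su}(2)$--action on $\mathcal V$ defined by $g\cdot T = [\hat g, T]$ is a derivation with respect to the associative product, so $g\cdot(T_1 T_2) = (g\cdot T_1) T_2 + T_1 (g\cdot T_2)$ by the Jacobi identity. Consequently the kernel $\mathcal V^{\operatorname{SU}(2)}$ is closed under multiplication, inheriting the commutative structure of $\mathcal V$. Since $\mathcal V$ is finite-dimensional over $\mathbb C$, so is $\mathcal V^{\operatorname{SU}(2)}$, and any finite-dimensional commutative $\mathbb C$-algebra is Artinian.

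Second, for the inclusion $R \subseteq \mathcal V^{\operatorname{SU}(2)}$ I would verify by a direct CAR computation that $[\hat S_\pm, \hat X_{i,b}] = [\hat S_z, \hat X_{i,b}] = 0$. This reduces to standard one-operator identities such as $[\hat S_z, \hat a_{p\alpha}^\dagger] = \pm \tfrac12\, \hat a_{p\alpha}^\dagger$ and the analogous formulas for $\hat S_\pm$. The two summands in $\hat X_{i,b} = \hat a_{b\uparrow}^\dagger \hat a_{i\uparrow} + \hat a_{b\downarrow}^\dagger \hat a_{i\downarrow}$ are arranged precisely so that the spin-raising and spin-lowering contributions cancel.

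Third, and this is the main step, I would establish the reverse inclusion $\mathcal V^{\operatorname{SU}(2)} \subseteq R$ by a dimension count anchored on an auxiliary $\operatorname{GL}_m$-action. I would extend the natural $\operatorname{GL}_m$-action on the spatial factor $\mathbb C^m$ of $\mathcal H = \mathbb C^m \otimes \mathbb C^2$ to $\mathcal F$ via second quantization. Because this action lives on the spatial factor while the $\operatorname{SU}(2)$-action lives on the spin factor, the two commute, so $\mathcal H_{2k}^{\operatorname{SU}(2)}$ is a $\operatorname{GL}_m$-submodule. Schur--Weyl duality applied to $\wedge^{2k}(\mathbb C^m \otimes \mathbb C^2)$ identifies the singlet sector with the irreducible Schur module $S^{(2^k)}\mathbb C^m$, whose dimension equals $N(m+1,k+1)$ by the hook-content formula, consistent with Theorem~\ref{th:DimHSU2}. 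Moreover $e_{[\![k]\!]}$ is the highest weight vector of $S^{(2^k)}\mathbb C^m$ with respect to the Borel subgroup of upper-triangular matrices, of weight $(2^k, 0^{m-k})$. The corresponding negative root vectors $E_{b,i} \in \mathfrak{gl}_m$, for $1 \le i \le k < b \le m$, act on $\mathcal F$ precisely as the excitation operators $\hat X_{i,b}$; they span the abelian nilradical of the parabolic subalgebra preserving the flag $\{0\} \subset \mathbb C^k \subset \mathbb C^m$, so by the PBW theorem their enveloping algebra coincides with the commutative polynomial ring $R$. Standard highest-weight theory applied to the irreducible $\operatorname{GL}_m$-module $\mathcal H_{2k}^{\operatorname{SU}(2)}$ then yields $R\cdot e_{[\![k]\!]} = \mathcal H_{2k}^{\operatorname{SU}(2)}$. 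Combined with the $\mathfrak{su}(2)$-equivariant isomorphism $\Phi$ of Proposition~\ref{prop:HdVd_isomorphic}, which is a bijection on all of $\mathcal V$, this forces $\dim R = \dim \mathcal V^{\operatorname{SU}(2)}$ and hence $R = \mathcal V^{\operatorname{SU}(2)}$.

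The hardest part will be setting up the commuting $\operatorname{GL}_m$-action and invoking Schur--Weyl duality to identify $\mathcal H_{2k}^{\operatorname{SU}(2)}$ with $S^{(2^k)}\mathbb C^m$ together with the highest-weight statement for $e_{[\![k]\!]}$, since this machinery is not developed elsewhere in the excerpt. A more elementary alternative would be to exhibit $N(m+1, k+1)$ explicit linearly independent polynomials in the $\hat X_{i,b}$ indexed by semistandard Young tableaux of shape $(2^k)$ with entries in $[m]$, via classical straightening relations; by Corollary~\ref{cor:invariants} this cardinality matches $\dim \mathcal V^{\operatorname{SU}(2)}$ and closes the inclusion directly.
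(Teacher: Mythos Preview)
Your proof is correct and takes a genuinely different route from the paper. The paper establishes $\dim R = N(m+1,k+1)$ by commutative algebra: it writes down cubic relations among the $\hat X_{i,b}$ (Eq.~\eqref{eq:gens}) and invokes an external Gr\"obner-basis result \cite{price2026plane} to count the standard monomials. Your argument instead obtains $\dim R$ via representation theory of $\operatorname{GL}_m$: the Cauchy decomposition of $\wedge^{2k}(\mathbb{C}^m\otimes\mathbb{C}^2)$ identifies the singlet sector with the irreducible Schur module $S^{(2^k)}\mathbb{C}^m$, the reference $e_{[\![k]\!]}$ is its highest weight vector, and the second-quantized images of the parabolic nilradical generators $E_{b,i}$ coincide with the $\hat X_{i,b}$, so $R\cdot e_{[\![k]\!]}=\mathcal{H}_{2k}^{\operatorname{SU}(2)}$ and injectivity of $\Phi$ closes the dimension count. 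Your approach is self-contained within standard highest-weight theory and avoids the external Gr\"obner reference; the paper's approach, in return, produces an explicit monomial basis of $R$ that is actually used downstream (in the remark following the theorem and in parametrizing the CCD/CCSD truncation varieties). One small point worth making explicit in your write-up: passing from $U(\mathfrak{n}^-)\,v_\lambda=V_\lambda$ to $U(\mathfrak{n}_P^-)\,v_\lambda=V_\lambda$ uses that the Levi $\mathfrak{gl}_k\times\mathfrak{gl}_{m-k}$ acts on $e_{[\![k]\!]}$ by a character, so its negative root vectors annihilate $e_{[\![k]\!]}$; this is immediate from Pauli exclusion but should be stated.
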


\begin{proof}
The ring $\mathcal{V}$ is Artinian, and hence its invariant ring $\mathcal V^{\operatorname{SU}(2)}$ is too.
We notice that $R$ is a subring of the ring $S$ generated by level-one words $\hat x_{i,b, \alpha} := \hat a_{b\alpha}^\dag \hat a_{i\alpha}$ where
$1\le i\le k<b\le m$ and $\alpha\in\{\downarrow,\uparrow\}$. Note $S$ is a proper subring of $\mathcal{V}$, since we restrict to products with a fixed spin. The ring $S$ has $2k(m-k)$ variables and can implicitly be described as
\[
S \cong \mathbb{C}[x_{1,k+1,\downarrow}, x_{1,k+1,\uparrow}, \dots, x_{k,m,\downarrow}, x_{k,m,\uparrow}]
\Big/
\langle x_{i,a,\alpha}x_{j,b,\alpha} + x_{j,a,\alpha}x_{i,b,\alpha}\rangle .
\]
The ring $R$ is isomorphic to the subring of $S$ generated by the $k(m-k)$ linear combinations
$\hat x_{i,b} := \hat x_{i,b,\uparrow}+\hat x_{i,b,\downarrow} = \hat X_{i,b}$, where $1 \le i \le k \le b \le m$. They fulfill the degree-three relations
\begin{equation}\label{eq:gens}
\sum_{\tau\in S_3} \hat x_{i,\tau(a)} \hat x_{j,\tau(b)} \hat x_{\ell,\tau(c)}=0,
\quad\text{where } i\le j\le \ell \text{ and } a\le b\le c .
\end{equation}
Let $I$ be the ideal generated by these relations and so $R\cong \mathbb{C}[x_{1,k+1},\dots,x_{k,m}]/I$.
In \cite[Theorem~1.2]{price2026plane}, Sverrisdóttir, together with Price and Stelzer, prove that these relations form a Gr\"obner basis and that the number of
standard monomials is $N(m+1,k+1)$. Therefore $R$ is a vector space of dimension $N(m+1,k+1)$. On the other hand, by Corollary~\ref{cor:invariants} and Theorem~\ref{th:DimHSU2}:
\begin{equation}
\dim \mathcal V^{{\rm SU}(2)} = \dim \mathcal H_{2k}^{{\rm SU}(2)} = N(m+1,k+1).
\end{equation}
Hence, it follows the claim $R=\mathcal V^{{\rm SU}(2)}$.
\end{proof}

\begin{remark}[Explicit generators of the invariance space]
The standard monomials of $R$ mentioned in the proof of Theorem \ref{thm:exring} form the desired homogeneous basis for $\mathcal{V}^{{\rm SU}(2)}$.
We look at the $2$nd grading of $R$, i.e., the space $R^{(2)}$ of homogeneous polynomials in $R$ of degree $2$.
There are no linear relations among the $\binom{km-k^2+1}{2}$ degree two monomials. Therefore they are standard monomials of $R$ and form a basis for $R^{(2)}$. Together with the unit $1$ and the degree one generators $\hat X_{i,b}$, they yield a basis for the subspace of
$\mathcal V^{{\rm SU}(2)}$ consisting of elements of degree at most $2$. 
Now consider the $3$rd grading $R^{(3)}$. There are linear relations among the degree three monomials in $\hat X_{i,b}$, namely the generators
in \eqref{eq:gens}. Consequently, an explicit description of the standard monomials and hence the basis elements of excitation level $\ge 3$ requires additional combinatorial tools; see
\cite{price2026plane}. Since we focus here on the CCD and CCSD formalisms, for which only invariants up to degree two
are required, we do not pursue these higher-degree invariants further.
\end{remark}

For $1 \le \ell \le k$ we define the \textit{$\ell$th truncated cluster operator} $\hat T_\ell$ to be a general element in the $\ell$th grading of $\mathcal{V}^{\operatorname{SU}(2)}$.
We can write the truncated cluster operators $\hat T_1$ and $\hat T_2$ explicitly:
\begin{equation}\label{eq:clustops}
    \hat T_1 = \sum_{(i,b)} t_{i,b}\hat X_{i,b}, \quad \hat T_2 = \sum_{(i,b) \le (j,c)} t_{ij,bc}\hat X_{i,b}\hat X_{j,c}, \quad\text{where } t_{i,b}, t_{ij,bc} \in \CC.
\end{equation}
Here the pairs $(i,b) \in [k] \times [m] \backslash [k]$ are ordered lexicographically, and the condition $(i,b)\le (j,c)$ avoids double counting of quadratic monomials. We then define the \textit{cluster operator} $\hat T$ as the sum of the truncated cluster operators, that is $\hat T = \hat T_1 + \hat T_2 + \cdots + \hat T_k$. We notice that $\hat T$ is a general element in the invariant space $\mathcal{V}^{{\rm SU}(2)}$ with a zero constant term.
We denote the space of cluster operators as 
$$
{\mathcal{V}^{{\rm SU}(2)}}' = \bigoplus_{\ell = 1}^{2k}  \mathcal{V}^{{\rm SU}(2)}_{(\ell)},
\quad {\rm with} \quad 
\mathcal V^{{\rm SU}(2)}_{(\ell)} := \mathcal V^{{\rm SU}(2)}\cap \mathcal V^{(\ell)}.
$$
This is the space of ${\rm SU}(2)$--invariant operators in $\mathcal{V}$ with a zero constant term. Equivalently this is the space of nilpotent operators in $\mathcal{V}^{{\rm SU}(2)}$, that is those $\hat T$ such that $\hat T^{2k + 1} = 0$.

\begin{proposition}\label{prop:spin_adapted_exp}
    The \textit{spin-singlet exponential map}
    \begin{equation}\label{eq:expmap}
{\mathcal V^{{\rm SU}(2)}}' \to {\mathcal H_{2k}^{{\rm SU}(2)}},
\qquad
\hat T \mapsto \exp(\hat T)\,e_{[\![k]\!]},
\end{equation}
    is an injective polynomial map.
    Its image coincides with the subspace
    \[
    \{ \psi \in {\mathcal{H}_{2k}^{{\rm SU}(2)}} ~:~ e_{[\![k]\!]}^T \psi = 1 \}\subset {\mathcal{H}_{2k}^{{\rm SU}(2)}}.
    \]
    and the corresponding restriction is a bijection with a polynomial inverse.
\end{proposition}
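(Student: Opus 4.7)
The plan is to exploit the Artinian commutative ring structure of $\mathcal V^{{\rm SU}(2)}$ (Theorem~\ref{thm:exring}), which renders exponential and logarithm into \emph{polynomial} operations on nilpotent elements. The key leverage is that the $\mathfrak{su}(2)$--equivariant isomorphism $\Phi$ of Proposition~\ref{prop:HdVd_isomorphic} already supplies a linear bijection $\mathcal V^{{\rm SU}(2)} \cong \mathcal H_{2k}^{{\rm SU}(2)}$, so the content of the statement reduces to matching the ``translated nilpotent part'' of an operator via $\exp$ and $\log$.

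First I would establish that the exponential map is polynomial and lands in the claimed affine slice. Any $\hat T \in {\mathcal V^{{\rm SU}(2)}}'$ lies in $\bigoplus_{\ell \ge 1} \mathcal V^{{\rm SU}(2)}_{(\ell)}$, and because $\mathcal V^{(\ell)} = \{0\}$ for $\ell > 2k$, it is nilpotent with $\hat T^{2k+1} = 0$. Hence $\exp(\hat T) = \sum_{j=0}^{2k} \hat T^j/j!$ is a polynomial expression in the cluster amplitudes. Invariance of the image under the induced $\mathfrak{su}(2)$--action is automatic, since every power of $\hat T$ commutes with $\hat S_\pm, \hat S_z$, while $e_{[\![k]\!]}$ is a singlet by Remark~\ref{re:refstate}. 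Because each summand of $\hat T^j$ for $j \ge 1$ strictly raises the excitation level, I obtain $\exp(\hat T)\,e_{[\![k]\!]} = e_{[\![k]\!]} + (\text{higher-grade terms})$, which gives the normalization $e_{[\![k]\!]}^T \exp(\hat T)\,e_{[\![k]\!]} = 1$.

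Next I would construct a polynomial inverse. Given $\psi \in \mathcal H_{2k}^{{\rm SU}(2)}$ with $e_{[\![k]\!]}^T \psi = 1$, Proposition~\ref{prop:HdVd_isomorphic} supplies a unique $\hat U = \Phi^{-1}(\psi) \in \mathcal V^{{\rm SU}(2)}$. Since the identity $\hat 1 \in \mathcal V^{{\rm SU}(2)}$ satisfies $\Phi(\hat 1) = e_{[\![k]\!]}$, the coefficient of $\hat 1$ in $\hat U$ equals $1$. Writing $\hat U = \hat 1 + \hat T'$ with $\hat T' \in {\mathcal V^{{\rm SU}(2)}}'$ nilpotent, I would define
$$
\log(\hat U) := \sum_{j=1}^{2k} \frac{(-1)^{j+1}}{j}\,(\hat T')^j \;\in\; {\mathcal V^{{\rm SU}(2)}}',
$$
which is again a polynomial expression in the coordinates of $\psi$, and lies in the invariance space since $\mathcal V^{{\rm SU}(2)}$ is closed under multiplication.

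The only step requiring care is verifying that $\exp$ and $\log$ are mutually inverse on nilpotent arguments. I would handle this by pulling back the classical formal-power-series identities $\exp(\log(1+x)) = 1+x$ and $\log(\exp(x)) = x$ from $\mathbb Q[\![x]\!]$ along the evaluation homomorphism $x \mapsto \hat T'$, which is well-defined precisely because $\hat T'$ is nilpotent in the commutative ring $\mathcal V^{{\rm SU}(2)}$; the identities then reduce to polynomial identities in $\mathbb Q[x]/(x^{2k+1})$. This is the main bookkeeping point, but I expect no substantive obstacle. Combined, these steps yield injectivity, surjectivity onto the slice $\{\psi : e_{[\![k]\!]}^T \psi = 1\}$, and a polynomial inverse, as claimed.
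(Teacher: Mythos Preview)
Your proof is correct. The opening steps (nilpotency, polynomiality, $\mathrm{SU}(2)$--invariance of the image, and the normalization $e_{[\![k]\!]}^T\exp(\hat T)e_{[\![k]\!]}=1$) match the paper essentially verbatim. The difference lies in how the inverse is built: the paper invokes an \emph{excitation-level induction} imported from \cite[Proposition~2.4]{faulstich2024algebraic}, recursively solving for amplitudes grade by grade, whereas you construct the inverse in closed form as $\psi \mapsto \log\bigl(\Phi^{-1}(\psi)\bigr)$, using the commutative Artinian ring structure of $\mathcal V^{\mathrm{SU}(2)}$ established in Theorem~\ref{thm:exring} together with the linear isomorphism $\Phi$ of Proposition~\ref{prop:HdVd_isomorphic}. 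Your route is more self-contained---it does not depend on the cited external reference---and makes the polynomiality of the inverse immediate, since $\Phi^{-1}$ is linear and the truncated logarithm is a polynomial in the entries of $\hat T'$. The paper's inductive approach has the slight advantage that it does not need the ring structure of $\mathcal V^{\mathrm{SU}(2)}$ (only the graded vector-space structure), but in the present setting, where Theorem~\ref{thm:exring} is already available, your argument is the cleaner one.
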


\begin{proof}
    Take a cluster operator $\hat T  \in {V^{{\rm SU}(2)}}'$. We recall that $\hat T$ is nilpotent and
    therefore $\exp{(\hat T)}e_{[\![k]\!]}$ has polynomial entries. The operator $\exp(\hat T)$ also commutes with the generators $\hat S_{\pm}$ and $\hat S_z$ so $\exp(\hat T)e_{[\![k]\!]} \in \mathcal{H}_{2k}^{{\rm SU}(2)}$ and the map is well defined and polynomial. Since $\hat T$ is nilpotent, the diagonal entries of $\exp{(\hat T)}$ are $1$ and so $e_{[\![k]\!]}^T \exp{(\hat T)}e_{[\![k]\!]} = 1$.
    The bijectivity and the existence of a polynomial inverse follow from the same excitation-level induction as in~\cite[Proposition~2.4]{faulstich2024algebraic}, applied within the invariant subspaces.
\end{proof}

Fix a \textit{level set} $\sigma \subseteq [d]$, specifying the allowed excitation levels. For instance, when employing CCS, CCD or CCSD, we set $\sigma= \{1\}, \{2\}, \{1,2\}$, respectively. Let $\mathcal V_\sigma^{{\rm SU}(2)}$ denote the subspace of $\mathcal{V}^{{\rm SU}(2)}$ consisting of operators whose excitation level lies in $\sigma$. Explicitly:
\begin{equation}
\mathcal V_\sigma^{{\rm SU}(2)}
:=
\bigoplus_{\ell\in\sigma} \mathcal{V}^{{\rm SU}(2)}_{(\ell)} \subseteq {\mathcal{V}^{{\rm SU}(2)}}'.
\end{equation}
The restriction of the spin-singlet exponential map, defined in Proposition \ref{prop:spin_adapted_exp}, to this subspace is injective. 
We may further compose it with the projective embedding
\[
\mathcal H_{2k}^{{\rm SU}(2)} \hookrightarrow \PP \mathcal H_{2k}^{{\rm SU}(2)}
\cong \PP^{N(m+1,k+1)-1}.
\]
Analogous to the truncation varieties in \cite{sverrisdottir2025algebraic,faulstich2024algebraic} we define the \textit{spin-singlet truncation variety} $V_\sigma^{\operatorname{SU}(2)} \subseteq \PP\mathcal{H}_{2k}^{\operatorname{SU}(2)}$ to be the Zariski closure of the image of $\mathcal{V}_{\sigma}^{{\rm SU}(2)}$ under this composed map. Hence, the spin-singlet truncation varieties are parameterized by restrictions of the spin-singlet exponential map. 
In this paper we focus on CCS, CCD and CCSD, where the elements of the subspaces $\mathcal{V}_{\sigma}^{\operatorname{SU}(2)}$ are the truncated cluster operators $\hat T_1$, $\hat T_2$ and $\hat T_1 + \hat T_2$ in Eq. \eqref{eq:clustops}. The corresponding truncation varieties are then parameterized by vectors:
 $$
 \psi = \exp(\hat T_1)e_{[\![k]\!]}, \qquad \psi = \exp(\hat T_2)e_{[\![k]\!]}, \qquad \psi = \exp(\hat T_1 + \hat T_2)e_{[\![k]\!]}.
 $$

\begin{example}[CCS for $m = 4$, $k = 2$]
    The CCS cluster operator is of the form
    $$
    \hat T_1 = t_{1,3}\hat X_{1,3} + t_{2,3}\hat X_{2,3} + t_{1,4}\hat X_{1,4} + t_{2,4}\hat X_{2,4}.
    $$
    We recall from \cite[Theorem 3.5]{faulstich2024algebraic} that the (spin-orbital) CCS truncation variety is the Grassmannian $\operatorname{Gr}(4,8)\subseteq \PP^{69}$ in its Pl\"ucker embedding.
    To relate this to the spin-singlet truncation variety, we compose the
spin-singlet CCS exponential map with the natural inclusion $\mathcal H_{4}^{{\rm SU}(2)}\hookrightarrow \mathcal H_{4}\cong  \CC^{70}$.
This yields a parametrization $\CC^{4}\to \PP \mathcal H_{4} \cong \PP^{69}$ of $V_{\{1\}}^{{\rm SU}(2)}$ in the cluster amplitudes $(t_{1,3},t_{2,3},t_{1,4},t_{2,4})$,
given by the maximal minors of the $4 \times 8$ matrix
    \begin{equation}\label{eq:grparam}
        \begin{pmatrix}
        1 & 0 & 0 & 0 & t_{1,3} & 0 & t_{1,4} & 0\\
        0 & 1 & 0 & 0 & 0 & t_{1,3} & 0 & t_{1,4}\\
        0 & 0 & 1 & 0 & t_{2,3} & 0 & t_{2,4} & 0\\
        0 & 0 & 0 & 1 & 0 & t_{2,3} & 0 & t_{2,4}\\
    \end{pmatrix}.
    \end{equation}
    Here the rows are indexed by $[\![2]\!]$ and the columns by $[\![4]\!]$. We further project the image onto the $20$ dimensional subspace $\PP\operatorname{Sym}_2(\wedge^2 \CC^4)$ described in Remark \ref{re:isspin}. 
   The resulting map is the quadratic Veronese embedding  $\nu_2$ of the Pl\"ucker embedding of $\operatorname{Gr}(2,4)\subseteq \mathbb P^{5}$.
    More precisely, on the standard affine chart $p_{12}=1$, the Pl\"ucker coordinates are
    $$
    (p_{12},p_{13},p_{23},p_{14},p_{24},p_{34})
    =
    \bigl(1,\ t_{2,3},\ -t_{1,3},\ t_{2,4},\ -t_{1,4},\ t_{1,3}t_{2,4}-t_{2,3}t_{1,4}\bigr).
    $$
Composing with $\nu_2$ gives all degree two monomials in the Pl\"ucker coordinates.
    Consequently, the spin-singlet truncation variety embedded into $\PP\operatorname{Sym}_2(\wedge^2 \CC^4)$ is the Veronese square of the Grassmannian:
    $$
    V_{\{1\}}^{\operatorname{SU}(2)} = \nu_2(\operatorname{Gr}(2,4)) \subseteq \PP\operatorname{Sym}^2(\wedge^2\mathbb C^{4}).
    $$ 
    In particular, $V_{\{1\}}^{{\rm SU}(2)}$ has dimension $4$ and degree $32$. Notice that the ambient singlet space $\mathcal H_{4}^{{\rm SU}(2)}$ has projective dimension $19$, whereas
$\operatorname{Sym}^2(\wedge^2\mathbb C^{4})$ has projective dimension $20$.
The difference is accounted for by the Pl\"ucker relation
\[
p_{12}p_{34} - p_{13}p_{24} + p_{23}p_{14} = 0,
\]
which becomes a \emph{linear} relation among the Veronese coordinates. We can therefore project onto $\PP\mathcal{H}_4^{\operatorname{SU}(2)}$ and embed the truncation variety $\nu_2({\rm Gr}(2,4))$ into that space.
\end{example}

This example generalizes to arbitrary $m$ and $k$:

\begin{theorem}\label{thm:vgr}
    The singlet CCS truncation variety
is isomorphic to the Veronese square of the Grassmannian, specifially
$
V_{\{1\}}^{{\rm SU}(2)} \;\cong\; \nu_2\bigl(\operatorname{Gr}(k,m)\bigr).
$
In particular,
\begin{equation}
\dim V_{\{1\}}^{{\rm SU}(2)} = k(m-k),
\quad {\rm and} \quad 
\deg V_{\{1\}}^{{\rm SU}(2)} = 2^{\,k(m-k)}\,\deg\bigl(\operatorname{Gr}(k,m)\bigr).
\end{equation}
\end{theorem}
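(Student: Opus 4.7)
The plan is to generalize the computation from the preceding example to arbitrary $k\le m$ via a block-Laplace expansion of the maximal minors of the spin-orbital CCS parametrization matrix, combined with the embedding $\mathcal H_{2k}^{{\rm SU}(2)}\hookrightarrow \operatorname{Sym}_2(\wedge^k\mathbb C^m)$ from Remark~\ref{re:isspin}. First, I would write down the $2k\times 2m$ parametrization matrix $M(t)$ realizing $\exp(\hat T_1)e_{[\![k]\!]}$ as a Plücker vector (via \cite[Theorem 3.5]{faulstich2024algebraic}). Because $\hat X_{i,b}=\hat a_{b\uparrow}^\dagger \hat a_{i\uparrow}+\hat a_{b\downarrow}^\dagger \hat a_{i\downarrow}$ couples only same-spin orbitals, $M(t)$ decomposes, after reordering rows and columns into spin-up and spin-down blocks, as a block-diagonal matrix with two identical blocks $X(t)=[I_k\mid T]\in\mathbb C^{k\times m}$, where $T=(t_{i,b})$.

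Next, I would Laplace-expand the $2k\times 2k$ minors of $M(t)$ along the up rows. The block structure forces a column set $J\subseteq[\![m]\!]$ of size $2k$ to yield a nonzero minor exactly when $J=J_\uparrow\sqcup J_\downarrow$ with $|J_\uparrow|=|J_\downarrow|=k$, and in that case the minor equals $\varepsilon_J\, p_{J_\uparrow}(X)\,p_{J_\downarrow}(X)$ for a sign $\varepsilon_J$ from the column permutation and $p_\bullet$ the Plücker coordinates of $\operatorname{Gr}(k,m)$. This identifies the Plücker coordinates of the spin-orbital image with bidegree-$(1,1)$ monomials in Plücker coordinates of $\operatorname{Gr}(k,m)$.

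Third, I would push these coordinates through the chain $\wedge^{2k}\mathcal H \twoheadrightarrow \operatorname{Sym}_2(\wedge^k\mathbb C^m) \hookleftarrow \mathcal H_{2k}^{{\rm SU}(2)}$. Under the symmetrization defining $\operatorname{Sym}_2(\wedge^k\mathbb C^m)$, the minors at $J_\uparrow\sqcup J_\downarrow$ and $J_\downarrow\sqcup J_\uparrow$ combine coherently into a single coordinate $p_{J_\uparrow}\!\cdot p_{J_\downarrow}$, so the image coordinates are exactly the degree-two monomials in Plücker coordinates, i.e.\ the image of the Veronese square $\nu_2$ of the Plücker embedding of $\operatorname{Gr}(k,m)$. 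That this image already lies in the smaller subspace $\mathcal H_{2k}^{{\rm SU}(2)}\subsetneq\operatorname{Sym}_2(\wedge^k\mathbb C^m)$ is enforced automatically: the \emph{quadratic} Plücker relations of $\operatorname{Gr}(k,m)$ become \emph{linear} relations among the Veronese coordinates, and a dimension count via Theorem~\ref{th:DimHSU2} confirms that the linear span of $\nu_2(\operatorname{Gr}(k,m))$ has dimension $N(m+1,k+1)=\dim\mathcal H_{2k}^{{\rm SU}(2)}$.

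Finally, the dimension and degree claims follow from $\dim\operatorname{Gr}(k,m)=k(m-k)$, the fact that Veronese embeddings preserve dimension, and the standard formula $\deg\nu_d(Y)=d^{\dim Y}\deg Y$ applied with $d=2$:
\begin{equation}
\deg \nu_2(\operatorname{Gr}(k,m)) = (2H)^{k(m-k)}\cdot[\operatorname{Gr}(k,m)] = 2^{k(m-k)}\deg\operatorname{Gr}(k,m),
\end{equation}
where $H$ is the hyperplane class of the Plücker embedding. The main obstacle I anticipate is the sign bookkeeping in step two: one must verify that the signs $\varepsilon_J$ from the block-Laplace expansion are compatible with the signs $(-1)^{|J\setminus K|}$ appearing in the symmetric generators of Remark~\ref{re:isspin}, so that no pairs cancel under the SU$(2)$-invariant projection. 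Once this compatibility is in place, the remaining identifications are structural and the Veronese interpretation follows immediately.
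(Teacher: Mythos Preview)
Your proposal is correct and follows essentially the same route as the paper's proof: block-diagonalize the $2k\times 2m$ parametrization matrix into two identical $k\times m$ blocks $[I_k\mid T]$, Laplace-expand the maximal minors to obtain products $p_I(t)\,p_J(t)$ of Pl\"ucker coordinates, and then identify the resulting map with $\nu_2\circ(\text{Pl\"ucker})$ inside $\operatorname{Sym}_2(\wedge^k\mathbb C^m)$, with the quadratic Pl\"ucker relations becoming the linear relations cutting out $\mathcal H_{2k}^{{\rm SU}(2)}$.

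One small remark: the sign bookkeeping you flag as the ``main obstacle'' is less delicate than you suggest. After the fixed row/column permutation that block-diagonalizes $M(t)$, the Laplace expansion yields $p_I(t)\,p_J(t)$ with no residual column-dependent sign; the only sign incurred is the global one from the permutation itself, which is absorbed into the choice of basis for $\mathcal H_{2k}$. The paper simply works with the permuted matrix $M'(t)$ from the outset and never needs to track $\varepsilon_J$. Also, rather than arguing that the linear span of $\nu_2(\operatorname{Gr}(k,m))$ has dimension $N(m+1,k+1)$ by an independent count, the paper invokes Proposition~\ref{prop:spin_adapted_exp} to guarantee \emph{a priori} that the image already lies in $\mathcal H_{2k}^{{\rm SU}(2)}$, and then reads off the number of independent Pl\"ucker quadrics as the dimension difference. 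Both routes are equivalent, but the latter avoids having to separately justify that the Grassmannian's degree-two coordinate ring has the Narayana dimension.
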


\begin{proof}
    Recall that the Plücker embedding parameterizes the  Grassmannian $\operatorname{Gr}(k,m)$ by the maximal minors of a $k \times m$ matrix $m(t) = \begin{pmatrix}
        I_k & t
    \end{pmatrix}$. Here $t = (t_{i,b})_{1 \le i \le k < b \le m}$ is a $k \times (m - k)$ matrix with the CCS cluster amplitudes as entries. We denote the \textit{Plücker coordinates} by $p_I(t) = \det(m(t)_I)$ where $I \subseteq [m]$ and $|I| = k$.
    
    We consider the composition of the spin-singlet exponential map with the natural inclusion $\mathcal H_{d}^{{\rm SU}(2)}\hookrightarrow \mathcal H_{d}\cong  \CC^{\binom{n}{d}}$. Its coordinates are the maximal minors of a $d \times n$ matrix $M(t)$, analogues to Eq.~\eqref{eq:grparam}.
    The columns of this matrix are indexed by the spin orbitals, i.e., $[\![m]\!]$, and the rows by the occupied spin orbitals, i.e., $[\![k]\!]$. 
    Row and column permutation such that spin-up orbitals precede spin-down orbitals yields a block-diagonalization of $M(t)$, i.e.

    $$
    M'(t) = 
    \begin{pmatrix}
        m(t) & 0_{k,m - k}\\
        0_{k,m - k} & m(t)
    \end{pmatrix}.
    $$
    The maximal minors of $M'(t)$ can be computed using $k \times k$ Laplace expansion. First, we note that a $2k\times 2k$ minor of $M'(t)$ is nonzero only if it selects exactly $k$ columns from the spin-down block and $k$ columns from the spin-up block. Now for $I,J\in\binom{[m]}{k}$ we get
\begin{equation}
\det\bigl(M'(t)_{\{I\downarrow\}\cup\{J\uparrow\}}\bigr)
=
\det\bigl(m(t)_I\bigr)\,\det\bigl(m(t)_J\bigr)
=
p_I(t)\,p_J(t).
\end{equation}
    These are the coordinates of the composed map. We have now embedded the spin-singlet truncation variety into the full space of quantum state, and want to project down again to the space of ${\rm SU}(2)$--invariants. To that end, we embed the image into the space $\PP\operatorname{Sym}_2(\wedge^k \CC^m)$ -- as is described in Remark \ref{re:isspin}. We obtain the Veronese embedding of~$\operatorname{Gr}(k,m)$:
    $$
    \CC^{k \times (m - k)} \to \PP^{\binom{m}{k} - 1} \to \PP\operatorname{Sym}_2(\wedge^k \CC^m), \quad t \mapsto (p_I(t) p_J(t))_{I \le J \in \binom{[m]}{k}}.
    $$
    By Proposition \ref{prop:spin_adapted_exp} the spin-singlet exponential parameterization maps into the invariant space $\mathcal{H}_{2k}^{\operatorname{SU}(2)}$, so we can further embed the Veronese square of the Grassmannian into $\PP\mathcal{H}_{2k}^{\operatorname{SU}(2)}$. Note that there are linear relations among the coordinates, namely, the Plücker relations -- quadratic equations in the $p_I(t)$ that cut out the Grassmannian. There are exactly 
    $$
    \dim(\operatorname{Sym}_2(\wedge^k \CC^m)) - \dim(\mathcal{H}_{2k}^{\operatorname{SU}(2)}) = \binom{\binom{m}{k} + 1}{2} - N(m + 1, k + 1).
    $$
    Plücker relations, minimally generating the Grassmannian. The projection of the Veronese square into $\PP\mathcal{H}_{2k}^{\operatorname{SU}(2)}$ is therefore also an embedding of $\nu_2(\operatorname{Gr}(k,m))$.
\end{proof}

We notice the Grassmannian of lines in $m$--dimensional space is a projective space, namely $\operatorname{Gr}(1,m) \cong \PP^{m - 1}$, and therefore linear. The CCS spin-singlet truncation varieties corresponding to one electron pair, $k = 1$, are therefore the \textit{2nd Veronese varieties}, $V_{\{1\}}^{{\rm SU}(2)} \cong \nu_2(\PP^{m - 1})$. 

\subsection{Spin-Adapted Coupled Cluster Equations} 

The truncation varieties are used to approximate the solutions of the Schrödinger equation Eq.~\eqref{eq:schr}. 
We fix a level set $\sigma$ and consider the spin-singlet truncation variety $V_\sigma^{{\rm SU}(2)}$. In general, none of the quantum states on $V_\sigma^{{\rm SU}(2)}$ will be eigenvectors of the Hamiltonian. We then relax the constraints of the eigenvalue problem and define the coupled cluster equations:
\begin{equation}\label{eq:CCeqs}
    (\hat H \psi)_\sigma = E\psi_\sigma, \quad \psi \in V_\sigma^{{\rm SU}(2)}.
\end{equation}
Here $\psi_\sigma$ denotes the projection of quantum state $\psi$ onto coordinates $\psi_I$ with excitation level $|I \backslash [\![k]\!]| \in \sigma$. Hence, $(\cdot)_\sigma$ defines a projection from $\mathcal{H}_{2k}$ onto the graded subspace $\oplus_{\ell \in \sigma} H_{2k}^{(\ell)}$.

For generic parameters $h_{pq}$ and $v_{pqrs}$, the number of solutions to the CC equations in Eq.~\eqref{eq:CCeqs} is finite and fixed. We call this number the \textit{coupled cluster degree} of the truncation variety $V_\sigma^{{\rm SU}(2)}$ and denote it by $\operatorname{CCdeg}(V_\sigma^{{\rm SU}(2)})$. By the parameter continuation theorem \cite[Theorem~1]{morgan1989coefficient}, the CC degree serves as an upper bound to the number of solutions of Eq.~\eqref{eq:CCeqs} for any Hamiltonian operator $\hat H$. 
Computing the CC degree of a truncation variety is generally difficult, as it requires solving the CC equations for a generic Hamiltonian $\hat H$. Nevertheless, one can obtain an upper bound in terms of invariants of $V_\sigma^{{\rm SU}(2)}$: 

\begin{proposition}\label{prop:CCbound}
    The coupled cluster degree of $V_\sigma^{{\rm SU}(2)}$ fulfills the following inequality:
    $$
    \operatorname{CCdeg}(V_\sigma^{{\rm SU}(2)}) \le (\dim(V_\sigma^{{\rm SU}(2)}) + 1)\deg(V_\sigma^{{\rm SU}(2)}).
    $$
\end{proposition}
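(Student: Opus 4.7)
The plan is to projectivize the energy parameter and view the spin-adapted CC system as a collection of bidegree $(1,1)$ equations on $V_\sigma^{\operatorname{SU}(2)} \times \mathbb{P}^1$, then evaluate the top intersection class in the Chow ring of this product.

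First I would introduce homogeneous coordinates $[\lambda:\mu]$ on $\mathbb{P}^1$ with $E = \lambda/\mu$ and rewrite Eq.~\eqref{eq:CCeqs}, together with the energy equation obtained from the reference component $(\hat H\psi)_0 = E\psi_0$, as the uniform bihomogeneous system
\begin{equation*}
\mu\,(\hat H \psi)_\ell \;-\;\lambda\,\psi_\ell \;=\; 0,\qquad \ell \in \{0\} \cup \sigma,\qquad \psi \in V_\sigma^{\operatorname{SU}(2)}.
\end{equation*}
Each equation is linear in the projective coordinates of $\psi$ and linear in $(\lambda,\mu)$, so is of bidegree $(1,1)$. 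Using the graded $\mathfrak{su}(2)$-equivariant isomorphism $\Phi$ of Proposition~\ref{prop:HdVd_isomorphic} to identify $\dim\mathcal{V}_\sigma^{\operatorname{SU}(2)}$ with the dimension of the target $\sigma$-graded subspace, the total number of equations is
\begin{equation*}
1 + \dim\!\Bigl(\bigoplus_{\ell\in\sigma}\mathcal{H}_{2k}^{(\ell),\operatorname{SU}(2)}\Bigr) \;=\; 1 + \dim V_\sigma^{\operatorname{SU}(2)} \;=\; d+1,
\end{equation*}
where $d := \dim V_\sigma^{\operatorname{SU}(2)}$.

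Next, I would invoke Bezout on the $(d+1)$-dimensional variety $V_\sigma^{\operatorname{SU}(2)} \times \mathbb{P}^1$. Let $h$ denote the restriction to $V_\sigma^{\operatorname{SU}(2)}$ of the hyperplane class of the ambient projective space, and $e$ the hyperplane class on $\mathbb{P}^1$. Each of the $d+1$ bilinear equations defines a divisor of class $h+e$. Expanding the top intersection in the Chow ring and using $h^{d+1}=0$ (since $\dim V_\sigma^{\operatorname{SU}(2)}=d$) together with $e^2=0$ (since $\dim\mathbb{P}^1=1$), only one term of the binomial expansion survives:
\begin{equation*}
(h+e)^{d+1} \;=\; \binom{d+1}{d}\,h^d\,e \;=\; (d+1)\,h^d\,e \;=\; (d+1)\,D,
\end{equation*}
where $D := \deg V_\sigma^{\operatorname{SU}(2)}$. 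Since every CC solution determines a point in the intersection of the $d+1$ divisors, and the CC degree is by definition the generic solution count, the stated inequality follows.

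The main subtlety I expect is verifying that, for generic $\hat H$, the $d+1$ bilinear divisors meet $V_\sigma^{\operatorname{SU}(2)}\times\mathbb{P}^1$ in a proper (zero-dimensional) scheme, so that Bezout bounds the number of isolated points directly. This reduces to showing that the linear forms $\pi_\ell \circ \hat H$ vary sufficiently generically in the Hamiltonian parameters $h_{pq}$ and $v_{pqrs}$; alternatively, the general form of Bezout that bounds the degree of the top intersection cycle regardless of dimension yields the same inequality once combined with the parameter continuation theorem cited before the statement.
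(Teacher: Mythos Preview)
Your argument is correct: projectivizing $E$, recognizing the $d+1$ equations as bidegree $(1,1)$ divisors on $V_\sigma^{\operatorname{SU}(2)}\times\mathbb{P}^1$, and expanding $(h+e)^{d+1}=(d+1)h^de$ in the Chow ring is exactly the expected proof. The paper does not give its own proof---it states that the argument is analogous to \cite[Theorem~5.2]{faulstich2024algebraic}---and that referenced argument is precisely the bilinear B\'ezout count you have reproduced, so your proposal matches the intended approach.
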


The proof of this Proposition is omitted as it is analogous to the proof of \cite[Theorem 5.2]{faulstich2024algebraic}. The dimension of a truncation variety can easily be computed by counting basis vectors with excitation level in $\sigma$. For example, the CCS and CCD spin-singlet truncation varieties have the dimensions
$$
\dim(V_{\{1\}}^{{\rm SU}(2)}) = k(m - k), \quad \dim(V_{\{2\}}^{{\rm SU}(2)}) = \binom{k(m - k) + 1}{2},
$$
respectively, and the dimension of the CCSD truncation variety is their sum. Finding an explicit formula for the degree of a truncation variety is more challenging. In the CCS case, an explicit formula was obtained in Theorem~\ref{thm:vgr}. By contrast, for CCD and CCSD no such formula is known, and the degree for specific values of $k$ and $m$ must be determined numerically. In some cases, we can obtain explicit formulas for the CC degree itself, such as:

\begin{theorem}
    Let $k = 1$. The CC degree of the 2nd Veronese variety $V_{\{1\}}^{{\rm SU}(2)}  \cong \nu_2(\PP^{m - 1})$~is
    $$
    \operatorname{CCdeg}(\nu_2(\PP^{m - 1})) = 2^{m} - 1.
    $$
    The CC and ED degrees of the Veronese variety coincide, see \cite{breiding2024metric}.
\end{theorem}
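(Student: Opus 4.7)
The plan is to specialize Theorem~\ref{thm:vgr} to $k=1$, invoke the CC--ED correspondence for Veronese varieties from \cite{breiding2024metric}, and then read off the ED degree from a classical Chern-class computation.

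First, I would use that $\operatorname{Gr}(1,m)\cong\PP^{m-1}$ is a projective space, so Theorem~\ref{thm:vgr} specializes to $V_{\{1\}}^{{\rm SU}(2)}\cong \nu_2(\PP^{m-1})$, with $\dim V_{\{1\}}^{{\rm SU}(2)}=m-1$ and $\deg V_{\{1\}}^{{\rm SU}(2)}=2^{m-1}$. Proposition~\ref{prop:CCbound} then gives the a priori bound $\operatorname{CCdeg}(V_{\{1\}}^{{\rm SU}(2)})\le m\cdot 2^{m-1}$, which is strictly larger than the claimed $2^m-1$ for $m\ge 2$; the content of the theorem is that equality holds at exactly $2^m-1$.

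Second, I would invoke \cite{breiding2024metric} to replace the CC degree of $\nu_2(\PP^{m-1})$ by its Euclidean distance degree. After eliminating the energy variable from Eq.~\eqref{eq:CCeqs} via the Galerkin projection $(\cdot)_\sigma$ onto the single-excitation grading, the resulting polynomial system in the $m-1$ cluster amplitudes becomes, up to a generic change of coordinates, the critical-point system for a generic symmetric bilinear form on $\operatorname{Sym}^2\CC^m$ restricted to the affine cone over $\nu_2(\PP^{m-1})$. This is exactly the defining system whose generic complex root count is the ED degree, so $\operatorname{CCdeg}(\nu_2(\PP^{m-1}))=\operatorname{EDdeg}(\nu_2(\PP^{m-1}))$. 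Third, I would apply the classical formula $\operatorname{EDdeg}(\nu_2(\PP^{m-1}))=2^m-1$. This can be derived from the polar-degree / Chern-class computation for Veronese varieties applied to the total Chern class $(1+h)^m$ of $\PP^{m-1}$ twisted by the hyperplane class $2h$ of the Veronese embedding, and it can be cross-checked in low dimensions: the conic $\nu_2(\PP^1)$ has ED degree $3=2^2-1$, and the Veronese surface $\nu_2(\PP^2)\subseteq\PP^5$ has ED degree $7=2^3-1$.

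The main obstacle is the second step: the equality $\operatorname{CCdeg}=\operatorname{EDdeg}$ is not formal, since the CC equations come with a physically constrained Hamiltonian and a specific grading projection, whereas the ED equations are parametrized by a generic symmetric bilinear form. The content of \cite{breiding2024metric} is precisely the verification that, for the spin-singlet CCS truncation variety with $k=1$, the generic physical parameters $h_{pq}$ and $v_{pqrs}$ realize a generic metric on the ambient space of the Veronese. Once that identification is in hand, the enumeration $2^m-1$ closes the proof.
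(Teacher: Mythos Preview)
Your route through the Euclidean distance degree has a genuine gap, and it is not the approach taken in the paper.

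The paper does \emph{not} reduce the CC equations to an ED problem for $\nu_2(\PP^{m-1})$. Instead, it writes out the CC system explicitly: for $k=1$ one has $\mathcal H_2^{{\rm SU}(2)}\cong\operatorname{Sym}_2(\CC^m)$, and after projecting onto the single-excitation sector the equations take the form $H_{\{1\}}\,x^{\otimes 2}=E\,x$ with $x\in\PP^{m-1}$ and $H_{\{1\}}$ a generic $m\times\binom{m+1}{2}$ matrix. This is recognized as the eigenpair equation $\nabla F_\eta(x)=E\,x$ for a generic \emph{cubic} form $F_\eta$, i.e.\ the eigenvector problem for a generic symmetric order-$3$ tensor in $m$ variables. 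The count $2^m-1$ then comes directly from the Cartwright--Sturmfels formula for tensor eigenvectors (this is what \cite[Theorem~12.17]{breiding2024metric} is being cited for).

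Your second and third steps misidentify the target problem. The system you obtain after eliminating $E$ is cubic in $x$ and encodes eigenvectors of an order-$3$ tensor; it is \emph{not} the critical-point system of a generic quadratic form (or distance function) restricted to $\nu_2(\PP^{m-1})$. Consequently your claimed formula $\operatorname{EDdeg}(\nu_2(\PP^{m-1}))=2^m-1$ is incorrect, and your cross-checks fail: a smooth plane conic has generic ED degree $4$ (Frobenius ED degree $2$), not $3$; the Veronese surface $\nu_2(\PP^2)\subset\PP^5$ has generic ED degree $13$ (Frobenius ED degree $3$), not $7$. The number $2^m-1$ is, if anything, the Frobenius ED degree of the \emph{third} Veronese $\nu_3(\PP^{m-1})$, which is why the tensor-eigenvector interpretation is the correct bridge. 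The Chern-class computation you sketch, applied to $\nu_2$, will not yield $2^m-1$, so even granting your CC--ED identification the argument would not close.
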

\begin{proof}
    Here $k = 1$ and so the dimension of the invariant space simplifies to 
    $$
    \dim(\mathcal{H}_{2k}^{{\rm SU}(2)}) = N(m + 1, 2) = \binom{m + 1}{2}.
    $$
    Remark~\ref{re:isspin} implies that, in this case, $\mathcal{H}_2^{\operatorname{SU(2)}} \cong \operatorname{Sym}_2(\CC^m)$, so the basis vectors of $\mathcal{H}_2^{{\rm SU}(2)}$ are $e_{i \downarrow, i \uparrow}$ and
    $e_{i\, \downarrow\,, j\, \uparrow} + e_{i\, \uparrow\,, j\, \downarrow}$. They are indexed by subsets in $[m]$ of size one and two. Also, the Veronese square $x^{\otimes 2} \in \nu_2(\PP^{m - 1})$ of $x \in \PP^{m - 1}$ projects onto $x$ under truncation, that is $(x^{\otimes 2})_{\{1\}} = x$.
    We can therefore rewrite the CC equations as
    \begin{equation}\label{eq:specialCCeqs}
         H_{\{1\}} x^{\otimes 2} = E x, \qquad x \in \PP^{m - 1},
    \end{equation}
    where $ H_{\{1\}}$ is a $m \times \binom{m + 1}{2}$ submatrix of the generic Hamiltonian $\hat H$, with only the rows indexed by singletons in $[m]$. We can construct a generic $m \times m \times m$ symmetric tensor $\eta$ such that $ H_{\{1\}}$ has entries $ H_{i, j\ell} = \eta_{i,j} \eta_{i,\ell}$. The tensor $\eta$ uniquely corresponds to a homogeneous polynomial of degree three, given by $F_\eta = \langle \eta, x^{\otimes 3} \rangle$, and its gradient satisfies $\nabla F_\eta =  H_{\{1\}} x^{\otimes 2}$. 
    The CC equations in Eq.~\eqref{eq:specialCCeqs} are therefore precisely the fixed-point equations of $\nabla F_h$. Equivalently, they are the eigenpair equations for the symmetric tensor $\eta$. Since $\hat H$ and $\eta$ are generic tensors we get by \cite[Theorem 12.17]{breiding2024metric} that the CC degree is $2^m - 1$.
\end{proof}

In Eq.~\eqref{eq:CCeqs} we formulate the unlinked coupled cluster equations in an algebraic way. This might be unintuitive for readers that are experienced in electronic structure theory. Nonetheless this formulation is important to prove results such as Proposition \ref{prop:CCbound}. 
However, this presentation is not optimal for explicitly solving the CC equations.
Since the restriction of the  exponential map to $\mathcal{V}_\sigma$ is injective, we can write the unlinked coupled cluster equations in the following way:
\begin{equation}\label{eq:compCCeqs}
    ((\hat H - E \cdot \hat I_{\mathcal{F}}) \exp(\hat T_\sigma) e_{[\![k]\!]})_\sigma  = 0.
\end{equation}
This is a square polynomial system of size $\dim(V_\sigma^{{\rm SU}(2)}) + 1$, whose variables are the energy $E$ and the cluster amplitudes $t_{I,B}$ with excitation level $|I| \in \sigma$. Using this formulation, we can solve the CC equations with \texttt{HomotopyContinuation.jl}~\cite{breiding2018homotopycontinuation}. For an in-depth description on how to solve the coupled cluster equations, we refer the reader to~\cite{sverrisdottir2024exploring}

\medskip
We close this chapter with a short remark on L\"owdin's symmetry dilemma:

\begin{remark}
\label{re:lowdin_dilemma}
Although the Hamiltonian is ${\rm SU}(2)$-invariant and its exact eigenstates can be chosen as simultaneous eigenstates
of $\hat S^2$ and $\hat S_z$, approximate ans\"atze may face a trade-off commonly referred to as \emph{L\"owdin's symmetry dilemma}.
At the mean-field level, enforcing spin symmetry (e.g.\ restricted Hartree--Fock) yields a spin-pure reference but may give
qualitatively poor energies in regimes of strong (static) correlation (for instance, along bond dissociation), because a single
symmetry-adapted determinant cannot represent near-degeneracy effects. Allowing symmetry breaking (e.g.\ unrestricted Hartree--Fock)
often lowers the energy and captures part of the static correlation, but produces \emph{spin-contaminated} states that do not lie in
$\mathcal H_d^{{\rm SU}(2)}$. 
The same tension can propagate to correlated methods built on a mean-field reference. In \emph{restricted} coupled cluster theory we
deliberately formulate the equations on the ${\rm SU}(2)$--invariant sector, see Eq.~\eqref{eq:schr}, so that the CC state is
spin-adapted by construction; however, this restriction can necessitate higher excitation rank to recover correlation effects that
a broken-symmetry reference may mimic at lower rank. Alternative strategies include symmetry restoration (spin projection) applied to
broken-symmetry references, which aims to combine the energetic flexibility of symmetry breaking with a final spin-pure wavefunction.
\end{remark}

\section{Numerical Simulations}\label{sec:numsim}

A practical algebraic measure of computational complexity in CC theory is the \emph{number of isolated solutions} (counted with multiplicity) for a generic instance of the CC equations, i.e., the \emph{CC degree}. In homotopy-based solvers, the CC degree directly controls the number of solution paths that {\it must} be tracked, and is therefore a reliable proxy for runtime and memory. The central numerical message of this section is that imposing ${\rm SU}(2)$--spin adaptation reduces the CC degree by orders of magnitude, making exhaustive solution and continuation computations feasible in regimes where the spin-generalized formulation becomes intractable.
The simulations presented here were performed using the software packages \texttt{HomotopyContinuation.jl}~\cite{breiding2018homotopycontinuation} and \texttt{PySCF}~\cite{sun2015libcint,sun2018pyscf,sun2020recent}.

\subsection{Scaling of Generic Spin Restricted Systems}
We begin by comparing the number of roots of the spin restricted coupled cluster equations with the number of roots of the spin generalized coupled cluster equations from~\cite[Section~5]{faulstich2024algebraic}. For $k=1$ (and $d = 2$) we investigate how the CC degree scales with the number of spin orbitals $n$, both for the CC equations for singles ($\sigma=\{1\}$) and for doubles ($\sigma=\{2\}$), similar to~\cite[Example~6.3]{faulstich2024algebraic}. In Figure~\ref{fig:scaling}, the blue curve shows the spin-generalized CC degree (see Theorem~5.5 and Corollary~5.3 in~\cite{faulstich2024algebraic} for CCS and CCD, respectively),
while the orange curve shows the corresponding spin-restricted CC degree. 

\begin{figure}[h!]
 \centering
 \begin{subfigure}[b]{0.45\textwidth}
     \centering
     \includegraphics[width=\textwidth]{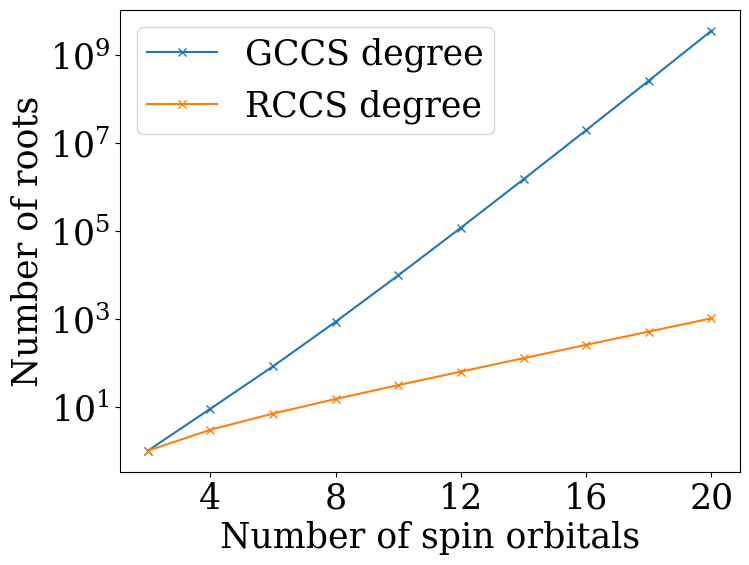}
     \label{fig:scaling_ccs}
 \end{subfigure}
 \hfill
 \begin{subfigure}[b]{0.45\textwidth}
     \centering
     \includegraphics[width=\textwidth]{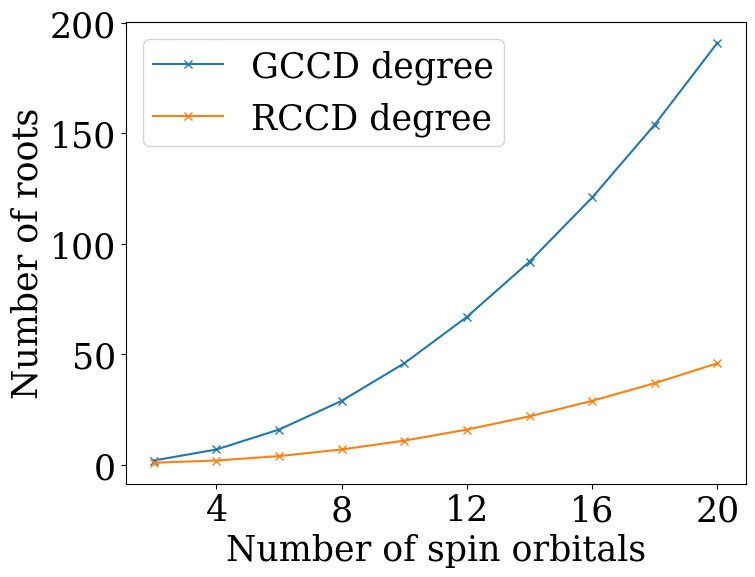}
     \label{fig:scaling_ccd}
 \end{subfigure}\vspace{-0.2in}
 \caption{\label{fig:scaling}
 Comparison of the number of roots between spin restricted and spin generalized CC equations for CCS (left) and CCD (right) at $k=1$.
 The reduction in degree translates into a corresponding reduction in the number of solution paths that must be tracked.}
\end{figure}

Figure~\ref{fig:scaling} quantifies the reduction in generic solution complexity obtained by imposing ${\rm SU}(2)$--symmetry.
For $k=1$ and CCS, the spin-generalized degree grows to $\sim 10^{9}$ at $20$ spin orbitals, whereas the spin-restricted degree is
only $\sim 10^{3}$, corresponding to a reduction of roughly six orders of magnitude in the number of isolated solutions (and hence in
the number of continuation paths in homotopy-based solvers). For CCD the degrees are smaller but the same trend persists: at $20$ spin
orbitals the degree decreases from about $190$ (generalized) to about $45$ (restricted), a reduction by a factor of $\approx 4$.
These data indicate that spin adaptation yields a substantial algorithmic simplification of the CC polynomial systems
already in the minimal-electron setting, motivating its use in the numerical studies below.

\medskip
Indeed, we observe that the effect becomes numerically pronounced and computationally observable already for systems as small as four electrons. We compare CCD degrees for the spin-restricted and
spin-generalized formalisms when $d=4$ (two electron pairs, $k = 2$) and the number of spin-orbitals is $n = 8,9,10,11,12$. The spin-generalized CCD degrees are taken from
\cite[Section~5]{faulstich2024algebraic} where available, and for larger $n$ we report the best available upper bounds from~\cite{faulstich2026On}. The resulting degrees are summarized in Table~\ref{tab:RCCDvsGCCDd4}.

\begin{table}[ht!]
    \centering
    \begin{tabular}{c|ccccc}
        $n$ & 8 & 9 & 10 & 11 & 12 \\
        \hline
        RCCD & 20 & -- & 998 & -- & $\sim 1.19\cdot 10^7$\\
        GCCD & 72 & 1,823 & 2,523,135 & $\le 3.2\cdot 10^{11}$  & $\le 1.18\cdot 10^{23}$
    \end{tabular}
    \caption{RCCD and GCCD degrees for four electrons ($d=4$). For $n=11,12$ the GCCD entries are upper bounds.
    The collapse in degree under spin adaptation directly reflects a collapse in the number of solution branches.}
    \label{tab:RCCDvsGCCDd4}
\end{table}

Two features are noteworthy. First, even at modest size ($n=10$), the degree collapses from $2{,}523{,}135$ in GCCD to $998$ in RCCD,
a reduction by more than three orders of magnitude. Second, the available bounds indicate that the gap continues to widen rapidly as
$n$ grows, underscoring that spin adaptation is not a cosmetic symmetry constraint but a decisive computational simplification. We now illustrate how the incorporation of spin-adaptation pushes the boundaries of algebro computational exploration for chemical systems.

\subsection{Lithium Hydride}

Lithium hydride (LiH) has served as a small yet ``chemically realistic'' benchmark system for algebraic--computational studies.
In a minimal-basis (STO-6G) description, LiH is modeled with four electrons (two electron pairs) in six spatial orbitals, i.e., twelve spin
orbitals. The minimal spatial basis is given by the atomic orbitals
\[
\mathrm{H}\,1s,\quad \mathrm{Li}\,1s,\quad \mathrm{Li}\,2s,\quad \mathrm{Li}\,2p_x,\quad \mathrm{Li}\,2p_y,\quad \mathrm{Li}\,2p_z.
\]
Moreover, LiH exhibits a clear energetic separation between core and valence orbitals, as reflected in the
molecular-orbital energies obtained from spin-restricted Hartree--Fock calculations~\cite{szabo2012modern}
\[
(-2.51132217,\,-0.3024674,\,0.06426691,\,0.14495239,\,0.14495239,\,0.59035732).
\]
The large gap to the lowest orbital is consistent with a chemically inert core, suggesting that LiH is well described by an
active-space picture in which correlation is dominated by the valence pair, while core excitations are energetically suppressed.
In the frozen core approximation we expect CCSD to be exact since this corresponds to a two-electron problem. 

\subsubsection{LiH Dissociation in a $\sigma$-Active Space}

We may choose the Li$-$H bond axis as the $z$-axis of the coordinate frame. In this setting the electronic structure separates into $\sigma$ and $\pi$ symmetry subspaces: the $\mathrm{Li}\,2p_x$ and $\mathrm{Li}\,2p_y$ atomic orbitals transform as $\pi$ functions (perpendicular to the bond) and therefore do not mix with the $\sigma$ manifold that governs bonding. Consequently, the molecular-orbital coefficient matrix is block diagonal, with a decoupled $\{2p_x,2p_y\}$ block. We therefore restrict our simulations to the {\it $\sigma$-active space} spanned by $\mathrm{H}\,1s$, $\mathrm{Li}\,1s$, $\mathrm{Li}\,2s$, and $\mathrm{Li}\,2p_z$, which yields an effective model with two electron pairs and four spatial orbitals. Note that if the linear symmetry were broken along the reaction path, this decoupling would no longer be exact and small $\sigma$--$\pi$ mixing could occur. Numerically, this decoupling is directly evident in the molecular-orbital (MO) coefficient matrix $C$ from spin-restricted Hartree--Fock calculations~\cite{szabo2012modern} (exemplified at the LiH-equilibrium geometry): 
\[
C=
\left[
\begin{array}{ccc|cc|c}
 0.974 & -0.328 & -0.147 & 0 & 0 & -0.385\\
-0.009 &  0.370 &  0.881 & 0 & 0 & -0.952\\
-0.020 &  0.441 & -0.548 & 0 & 0 & -1.076\\ \hline
 0     &  0     &  0     & 0 & 1 & 0\\
 0     &  0     &  0     & 1 & 0 & 0\\ \hline
 0.072 &  0.535 & -0.183 & 0 & 0 & 1.557
\end{array}
\right],
\]
where two of the orbitals are supported entirely on the $\{ \mathrm{Li}\,2p_x, \mathrm{Li}\,2p_y\}$ basis functions and have zero coefficients on the remaining atomic orbitals. Hence, $C$ can indeed be arranged into a block diagonal matrix with a decoupled $2\times 2$ $\pi$-block. We investigate the root structure for RCCSD while dissociating LiH over bond lengths from $1$ to $3$ bohr. The spin-singlet CC degree in this case is
\[
\operatorname{CCdeg}(V_{\{1,2\}}^{\rm SU(2)})=620.
\]
opposed to the spin-generalized CC degree reported in~\cite{sverrisdottir2024exploring}:
\[
\operatorname{CCdeg}(V_{\{1,2\}}) \approx 16,952,996.
\]
We discretize the interval using an equidistant grid with 50 points and compute between $618$ and $620$ RCCSD solutions per grid point. For comparison, diagonalizing the corresponding Hamiltonian (i.e., a $20\times 20$ real-valued symmetric matrix) yields $20$ distinct eigenvalues.

\begin{figure}[htpb]
     \centering
     \begin{subfigure}[b]{0.45\textwidth}
         \centering
         \includegraphics[width=\textwidth]{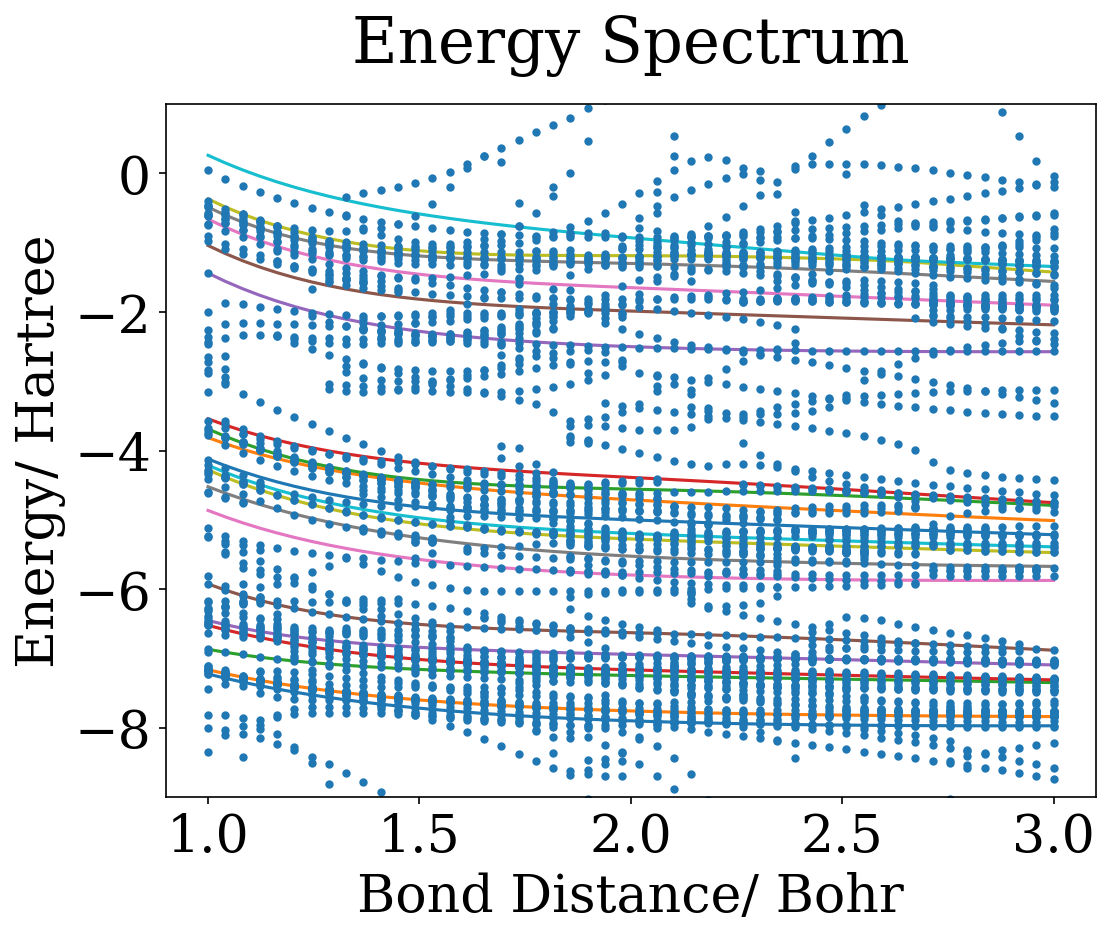}
         \caption{\label{fig:LiHCCSDfull}}
     \end{subfigure}
     \hfill
     \begin{subfigure}[b]{0.45\textwidth}
         \centering
         \includegraphics[width=\textwidth]{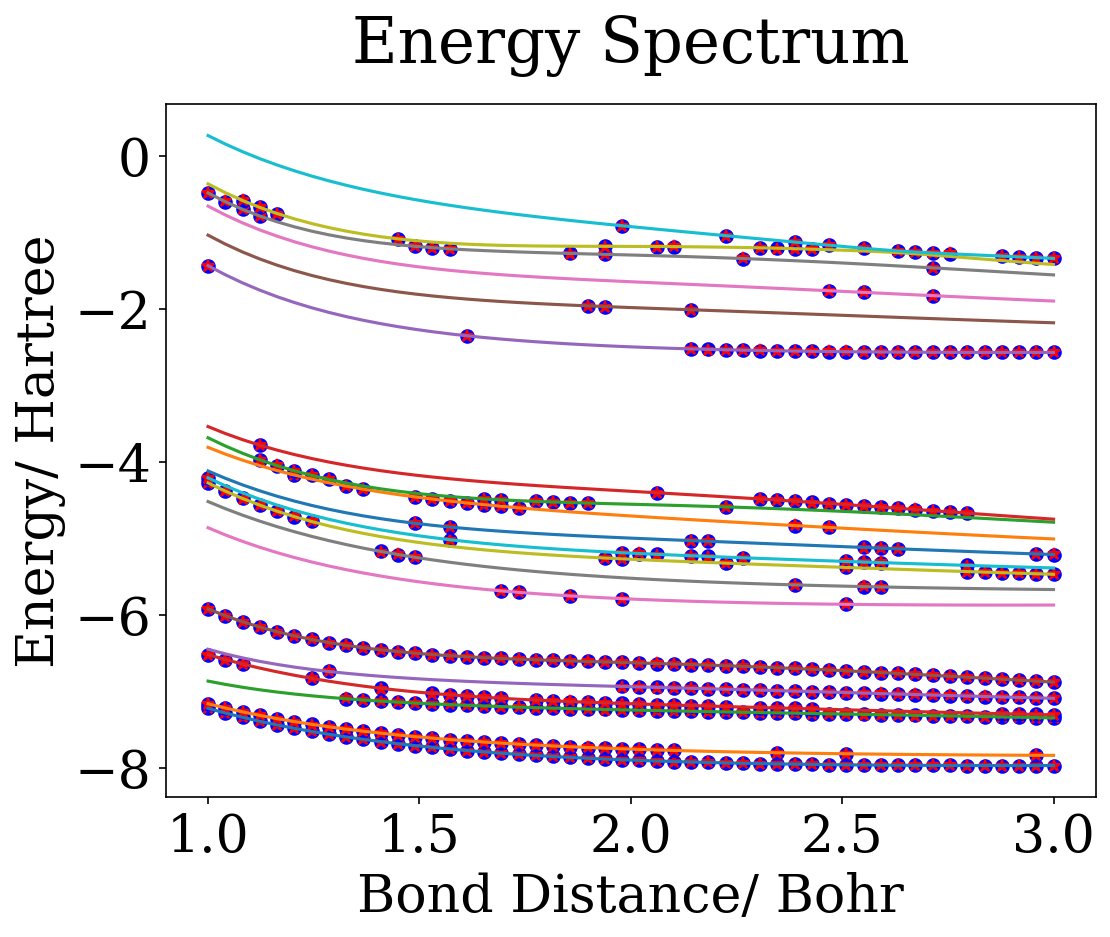}
         \caption{\label{fig:LiHCCSDmin}}
     \end{subfigure}
     \caption{\label{fig:LiHCCSD}
     LiH dissociation in a minimal basis ($k=2$, $m=4$): comparison of RCCSD solution branches with the exact eigenvalue curves.
     (a) All RCCSD solution branches compared to the exact spectrum. (b) RCCSD solutions lying near an eigenvalue (the physically relevant).
     }
\end{figure}

Figure~\ref{fig:LiHCCSDfull} shows all RCCSD solution branches together with the exact eigenvalue curves, while
Figure~\ref{fig:LiHCCSDmin} highlights only those RCCSD solutions that lie close to an eigenvalue,i.e., an energy difference of less than $10^{-3}$.

\subsection{All-electron LiH calculations}

In previous algebro-computational investigations, all-orbital simulations for LiH were out of computational reach due to the sheer size of the CC polynomial system and the corresponding solution count. As reported in~\cite{faulstich2024algebraic}, the corresponding CCD and CCSD systems consist of $168$ and $200$ equations, respectively -- well beyond the practical limits of current algebraic numerical solvers. Moreover, the number of homotopy paths that would need to be tracked vastly exceeds the approximately $1.7 \times 10^7$ paths tracked in~\cite{sverrisdottir2024exploring}, where the CCSD calculations for LiH in $8$ spatial orbitals took over $30$ days.
In the present setting, however, imposing spin adaptation reduces the effective algebraic complexity sufficiently to make a RCCD solution-landscape computation feasible, for the all-electron system.  

\medskip
All subsequent computations were done on the MPI--MiS computer server, using four 18-Core Intel Xeon E7-8867 v4 at 2.4 GHz (3072 GB RAM).
For the generic (random-coefficient) instance of the all-electron RCCD system, the monodromy solver required 21 days
to recover the complete solution set, revealing the RCCD degree for $k = 2$ and $m = 6$ to be
\begin{equation}\label{eq:CCdegLiH}
    {\rm CCdeg}(V_{\{1,2\}}^{\rm SU(2)}) \approx 11{,}920{,}113
\end{equation}
Recall that a lower bound for the same system in GCCD is $10^{23}$, see Table~\ref{tab:RCCDvsGCCDd4}.
Once the generic solutions were obtained, a parameter homotopy tracked them to the LiH instances in 3 days and 48 minutes. This computation produced $67{,}909$ solutions, of which 552 are non-singular and 106 are real.
In Figure \ref{fig:LiHCCDfull} we compare the full solution spectrum with the 71 distinct eigenvalues of the symmetric $105 \times 105$ Hamiltonian matrix. In Figure \ref{fig:LiHCCDkept} we highlight the only those RCCD solutions that lie within a $10^{-3}$ radius of an eigenvalue.
In total, this provides (to our knowledge) the first all-electron LiH computation in which the \emph{entire} RCCD solution set is explicitly resolved.

\begin{figure}[h!]
     \centering
     \begin{subfigure}[b]{0.45\textwidth}
         \centering
         \includegraphics[width=\textwidth]{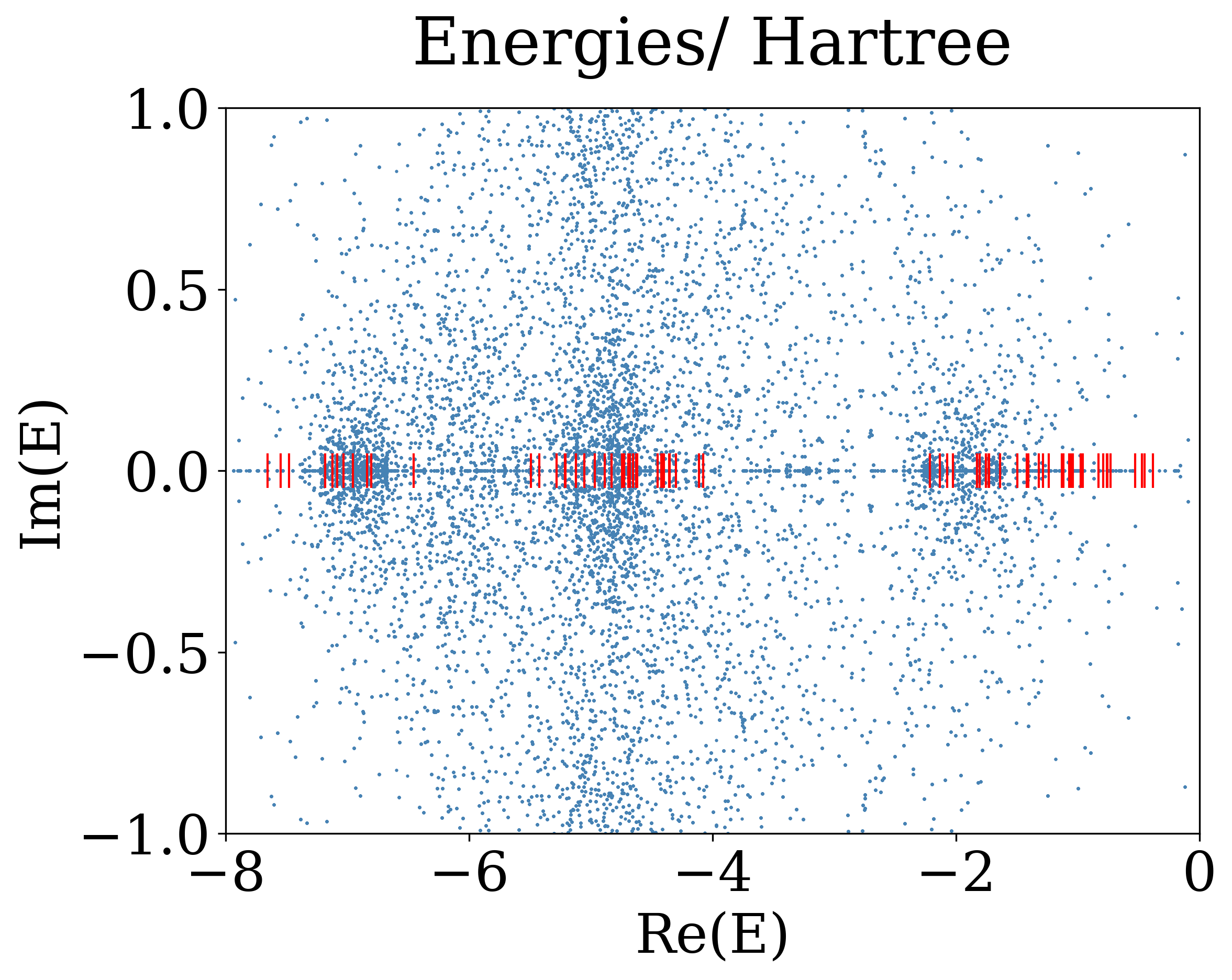}
         \caption{\label{fig:LiHCCDfull}}
     \end{subfigure}
     \hfill
     \begin{subfigure}[b]{0.45\textwidth}
         \centering
         \includegraphics[width=\textwidth]{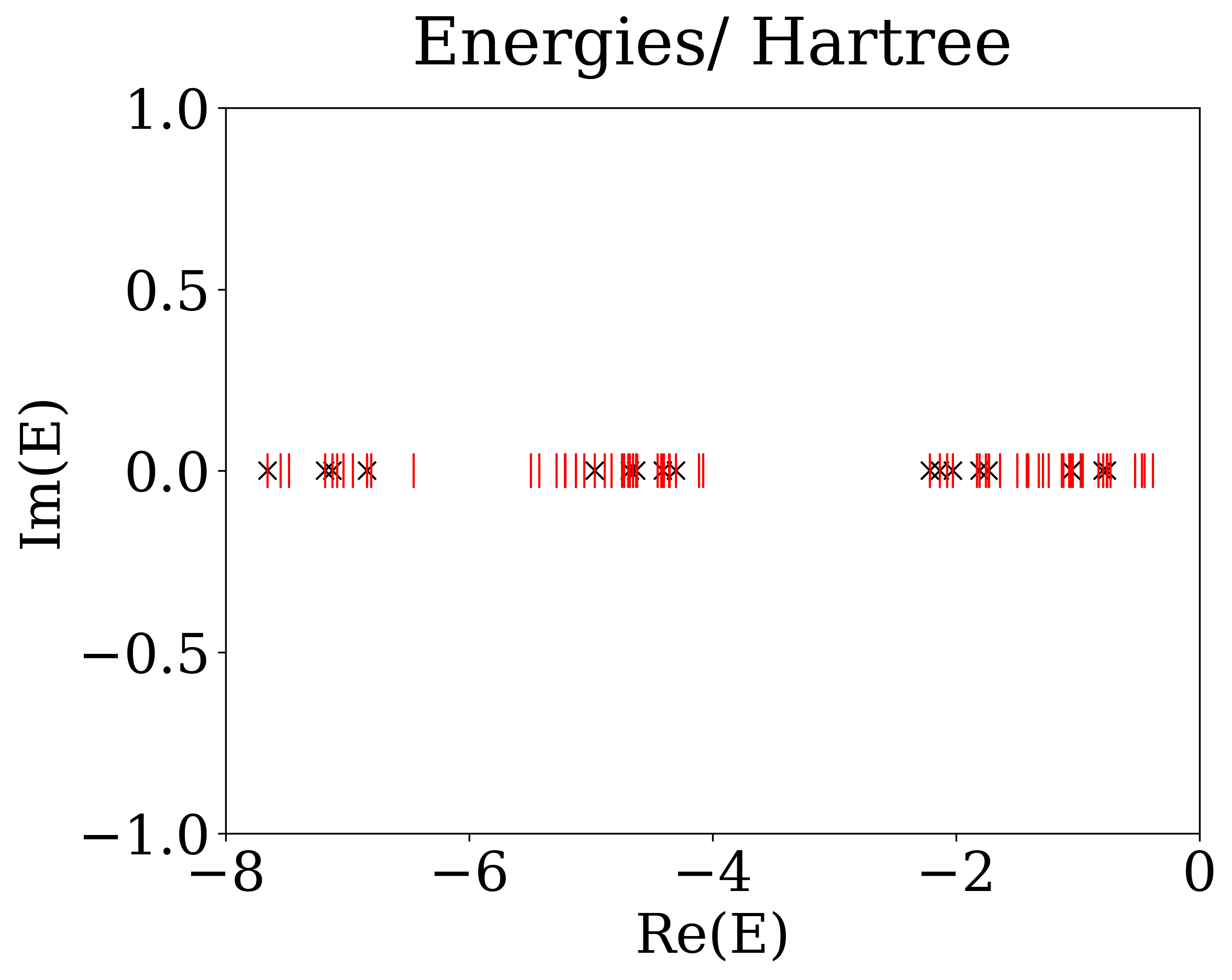}
        \caption{\label{fig:LiHCCDkept}}
     \end{subfigure}
     \caption{\label{fig:LiHCCSD}
     LiH dissociation in a full basis ($k=2$, $m=6$): comparison of RCCD solutions with the exact eigenvalues.
     (a) All RCCSD solutions compared to the exact spectrum. (b) RCCD solutions lying near an eigenvalue (physically relevant).
     }
\end{figure}

\medskip
The resulting all-electron picture reinforces two qualitative observations already present in the $\sigma$-active-space study. First, there are multiple physically relevant RCCD solutions close to exact eigenvalues. Second, the RCCD polynomial system admits additional, non-physical branches whose energies do not correspond to any eigenvalue of the Hamiltonian and may even fall below the exact ground-state energy. Such ``overcorrelated'' solutions are consistent with the non-variational character of coupled cluster theory, but here they appear as a \emph{structural} feature of the RCCD algebraic solution set rather than a numerical accident. Their systematic presence in the all-electron calculation emphasizes the value of global solution methods (monodromy/parameter homotopy) for characterizing the full CC landscape and motivates {\it a posteriori} diagnostics for identifying the physically meaningful root among many mathematically valid ones~\cite{janssen1998new,nielsen1999double,bartlett2020index,faulstich2023s}.

\subsection{Water}

Water (H$_2$O) in a minimal basis set discretization has ten electrons (five electron pairs) in seven spatial orbitals, i.e., 14 spin orbitals. The minimal spatial basis is given by the atomic orbitals  
\[
\mathrm{H}\,1s,\quad \mathrm{H}\,1s,\quad \mathrm{O}\,1s,\quad \mathrm{O}\,2s,\quad \mathrm{O}\,2p_x,\quad \mathrm{O}\,2p_y,\quad \mathrm{O}\,2p_z.
\]
We note that water exhibits a clear energetic separation between the core and valence manifolds at the Hartree--Fock level. The MO energies (in $E_h$) are
\[
(-20.5043,\,  -1.2759,\,  -0.6209,\,  -0.4593,\, -0.3975,\,   0.5970,\,   0.7299 ).
\]
showing a large gap of $\Delta\varepsilon \approx 19.23\,E_h$ ($\approx 523\,\mathrm{eV}$) between the lowest orbital and the remaining valence/virtual spectrum. This separation is consistent with a chemically inert core orbital, so that electron correlation relevant to bonding and low-energy excitations is dominated by the valence space, while core excitations are energetically suppressed.

\medskip
To quantify the core character, we compute a Mulliken-like gross population of molecular orbital $i$ on the chosen core atomic-orbital (AO)  subspace~\cite{mulliken1955electronic,mulliken1955electronic3},
\[
w_{\mathrm{core}}^{(i)} \;=\; \sum_{\mu \in \mathrm{core}} C_{\mu i}\,(SC)_{\mu i},
\]
where $C$ is the MO coefficient matrix and $S$ is the AO overlap matrix. Taking the core subspace to be the oxygen $1s$ AO(s), we obtain for the lowest MO
\[
w_{\mathrm{O\,1s}}^{(0)} = 0.996559,
\]
indicating that MO~0 is essentially a pure O $1s$ core orbital. We therefore adopt a frozen-core treatment by keeping this doubly occupied orbital inactive and correlating only the remaining (valence) orbitals. In practice, this corresponds to freezing MO~0 (the O~$1s$ orbital) in all post-HF and active-space calculations reported here.

\medskip
We solved a generic instance of the RCCD equations for $k = 4$ and $m = 6$ using a monodromy solver. After running the solver for 6 days and 15 hours, we obtained $11{,}920{,}154$ solutions. By particle–hole symmetry, the corresponding RCCD degree coincides with that for $k = 2$ and $m = 6$. Comparing this count with the approximate CC degree reported in Eq.~\eqref{eq:CCdegLiH}, we find a discrepancy of $41$ solutions. We attribute this discrepancy to numerical error and assume that both monodromy computations failed to detect some solutions. In~\cite[Monodromy]{sverrisdottir2024exploring}, we describe the stopping criteria used in monodromy computations and explain why, by their very nature, such methods make it difficult to verify solution completeness. Consequently, the computed solution counts provide numerical approximations to the CC degree that are always lower bounds, and which with high probability are exact or very close to the true value.
Once the generic solutions were obtained, a parameter homotopy tracked them to the H$_2$O
instances in 13 hours, finding $10{,}628{,}368$ solutions, of which $603$ are singular and $7{,}396$ are real. In Figure \ref{fig:H2OCCDfull} we compare the full solution spectrum with the 105 distinct eigenvalues the nonsingular Hamiltonian matrix. In Figure \ref{fig:H2OCCDkept} we highlight the only those RCCD solutions that lie within a $10^{-3}$ radius of an~eigenvalue.

\begin{figure}[h!]
     \centering
     \begin{subfigure}[b]{0.45\textwidth}
         \centering
         \includegraphics[width=\textwidth]{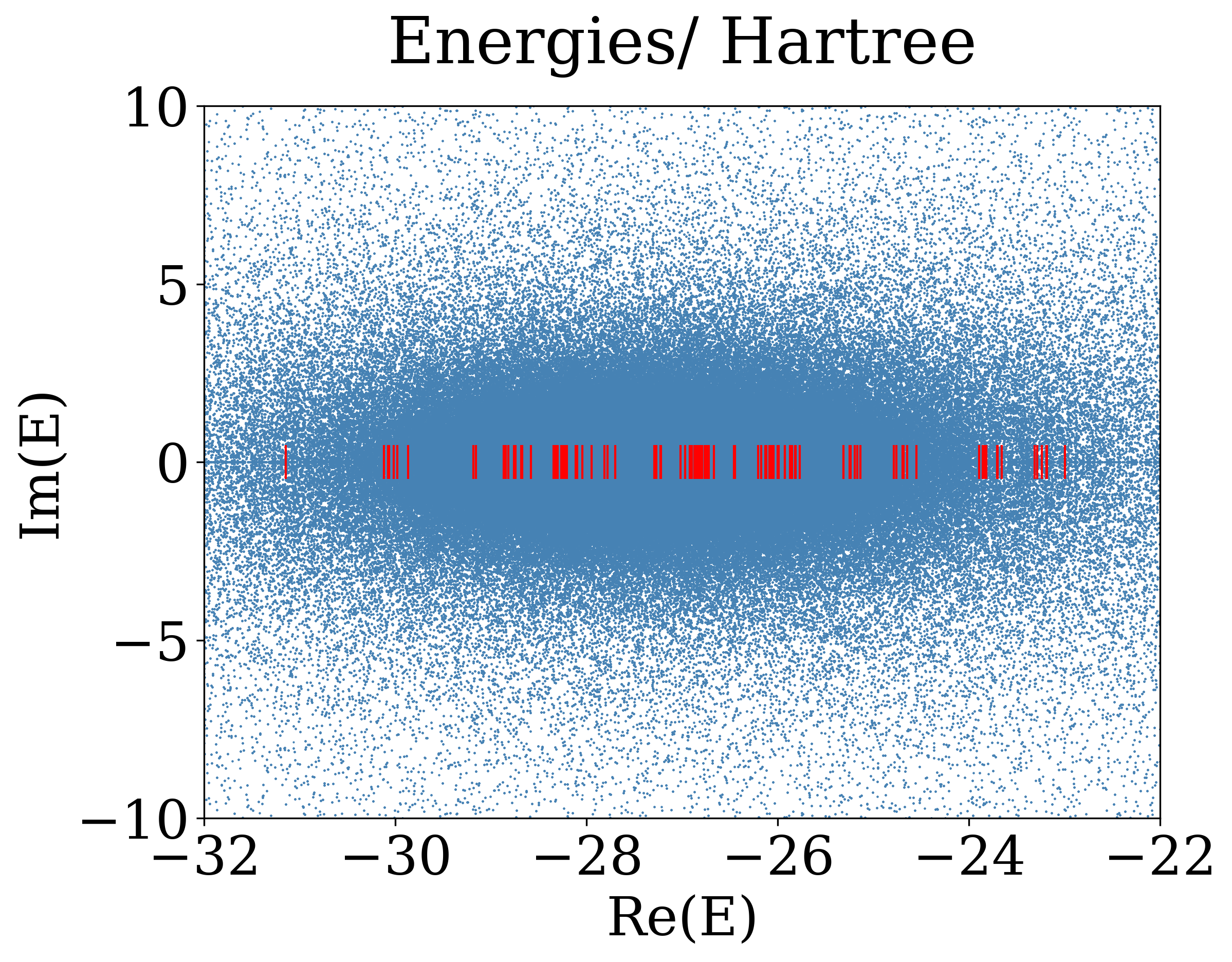}
         \caption{\label{fig:H2OCCDfull}}
     \end{subfigure}
     \hfill
     \begin{subfigure}[b]{0.45\textwidth}
         \centering
         \includegraphics[width=\textwidth]{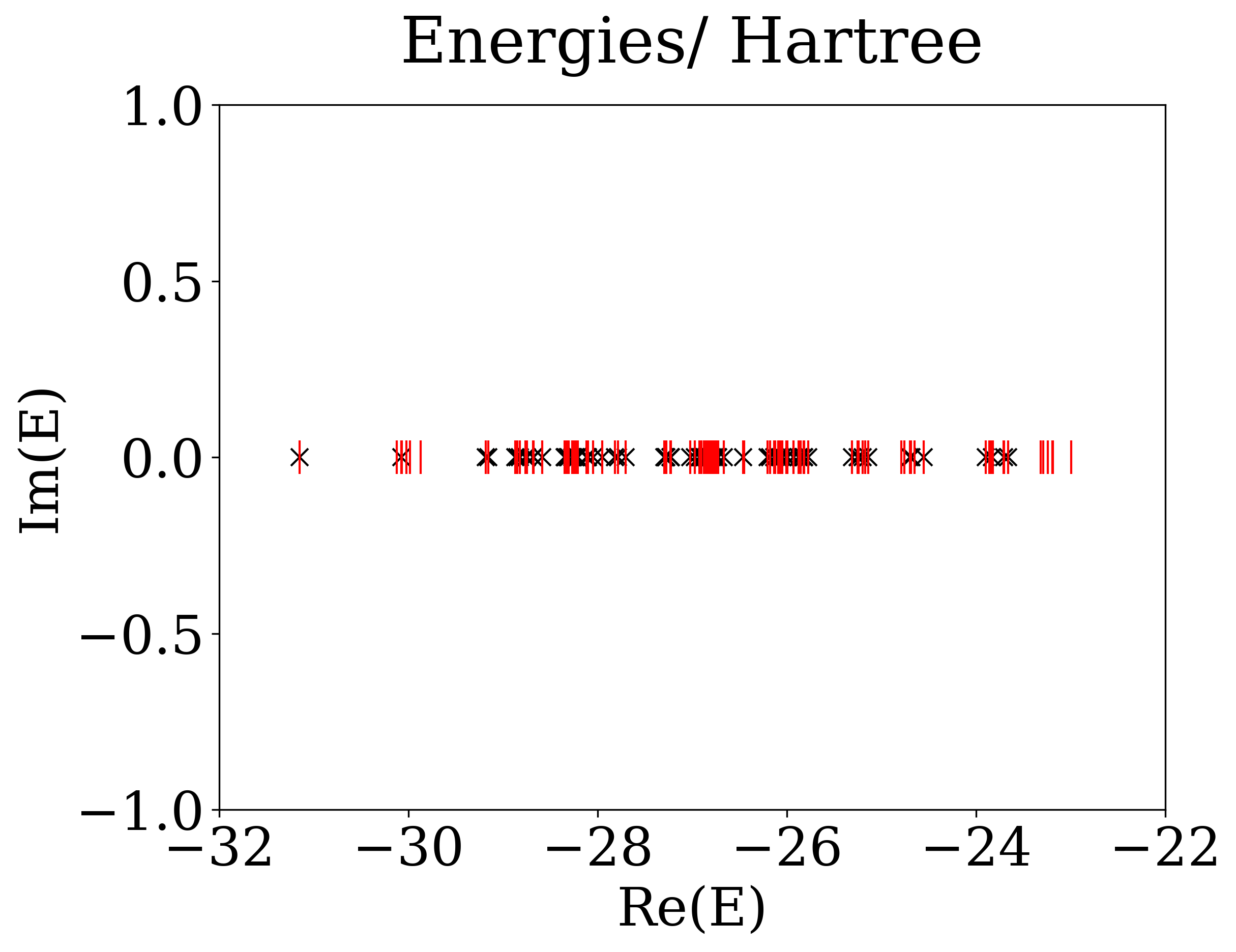}
        \caption{\label{fig:H2OCCDkept}}
     \end{subfigure}
     \caption{\label{fig:LiHCCSD}
     H$_2$O dissociation in a minimal basis ($k=4$, $m=6$): comparison of RCCD solutions with the exact eigenvalues.
     (a) All RCCD solutions compared to the exact spectrum. (b) RCCD solutions lying near an eigenvalue (physically relevant).
     }
\end{figure}

The full RCCD solution spectrum for H$_2$O appears as a cloud surrounding the exact eigenvalues, becoming denser toward the center of the spectrum. Consequently, some solution energies may lie close to an eigenvalue for purely statistical reasons, rather than reflecting physical significance. This indicates that further investigation into the numerical stability of the computed roots is needed. However, this lies beyond the scope of this manuscript and is left for future work.

\medskip

\noindent {\bf Acknowledgment}:
We thank Veronica Calvo Cortes for her helpful insights on representation theory. 


\pagebreak
\bibliographystyle{unsrt}
\bibliography{bibtex.bib}

\end{document}